\setlist[enumerate]{leftmargin=.5in}
\setlist[itemize]{leftmargin=.5in}
\crefname{hypothesis}{Hypothesis}{Hypotheses}
\title{Multifidelity Approximate Bayesian Computation. \thanks{Submitted to the editors: 29 Nov 2018.
}
}
\author{Thomas P. Prescott\thanks{Wolfson Centre for Mathematical Biology, Mathematical Institute, University of Oxford, Andrew Wiles Building, Woodstock Road, Oxford, OX2 6GG, U.K.
  (\email{prescott@maths.ox.ac.uk}, \email{baker@maths.ox.ac.uk}).}
\and Ruth E. Baker\footnotemark[2]}
\DeclareMathOperator*{\argmin}{arg\,min}
\begin{document}

\maketitle

\begin{abstract}
A vital stage in the mathematical modelling of real-world systems is to calibrate a model's parameters to observed data. 
Likelihood-free parameter inference methods, such as approximate Bayesian computation (ABC), build Monte Carlo samples of the uncertain parameter distribution by comparing the data with large numbers of model simulations. 
However, the computational expense of generating these simulations forms a significant bottleneck in the practical application of such methods. 
We identify how simulations of corresponding cheap, low-fidelity models have been used separately in two complementary ways to reduce the computational expense of building these samples, at the cost of introducing additional variance to the resulting parameter estimates. 
We explore how these approaches can be unified so that cost and benefit are optimally balanced, and we characterise the optimal choice of how often to simulate from cheap, low-fidelity models in place of expensive, high-fidelity models in Monte Carlo ABC algorithms. 
The resulting early accept/reject multifidelity ABC algorithm that we propose is shown to give improved performance over existing multifidelity and high-fidelity approaches.
\end{abstract}

\begin{keywords}
 Bayesian inference; Likelihood-free methods; Stochastic simulation; Multifidelity methods.
\end{keywords}

\begin{AMS}
 62F15; 65C20; 65C60; 93B30; 92C42
\end{AMS}

\section{Introduction}
\label{s:intro}


Throughout all scientific domains, predictive models of complex dynamical systems require calibration against experimental data.
%
Approximate Bayesian computation (ABC) is a popular likelihood-free method of parameter inference for complex models in the biomedical sciences~\cite{Sisson2018}. 
Rather than calculating the likelihood of the data for any given parameter, the predictive model is simulated using that parameter. 
The likelihood is then estimated based on how close, in some sense, the observed data is to the simulated data.
A classical technique is known as rejection sampling, where the likelihood is approximated with a randomly assigned value of $1$ (accept) or $0$ (reject), where the probability of acceptance is larger for simulations that are close to the data.
The prior parameter distribution is explored by repeatedly evaluating this accept/reject decision for a large number of parameter values sampled from the prior.
Therefore, ABC sampling typically requires a large number of simulations, which can form a bottleneck if the computational cost of each simulation is prohibitively high.

The efficiency of ABC can be improved using parallelisation~\cite{Jagiella2017}, or with alternative sampling strategies that reduce the number of required simulations by a more efficient exploration of the prior distribution.
These include Markov chain Monte Carlo (MCMC)~\cite{Marjoram2003} and Sequential Monte Carlo (SMC)~\cite{DelMoral2012,Toni2009} approaches, which ensure that simulated parameters are sampled more often from high-likelihood regions of parameter space.
A wider discussion of these sampling strategies can be found in~\cite[ch.~4]{Sisson2018}.
Although the parameter space is explored more efficiently with these methods, there remains a high computational burden from a large number of repeated simulations.
Rather than focusing on exploring parameter space efficiently, this paper instead focuses on reducing the computational burden of the Monte Carlo sampling approach by using models that can be simulated more cheaply.

In this work, we consider a model as a map from a parameter vector to a distribution on an output space.
To simulate a model is to draw from the output distribution, the computational burden of which is the simulation cost.
Note that our use of `model' includes domain-specific modelling choices and numerical implementation.
Many ways to approximate a given model with one that can be simulated more cheaply have been proposed and investigated, such as model reduction~\cite{Antoulas2005,Benner2015,Benner2017,Snowden2017}, discretisation~\cite{Gillespie2001}, surrogate modelling~\cite{Rasmussen2006}, and early stopping~\cite{Jiang2012}.
Recent work~\cite{Peherstorfer2018,Peherstorfer2016} unifies these approaches in the context of \emph{multifidelity} methods, which integrate information from many models of the same system to accelerate tasks such as optimisation, inference, and uncertainty quantification.
Here, we use the terminology of Peherstorfer et al.~\cite{Peherstorfer2016}, denoting the model being calibrated as the high-fidelity model, and other models as low-fidelity models.
Simulations from low/high-fidelity models are termed low/high-fidelity simulations: we assume that low-fidelity simulations are cheaper than high-fidelity simulations.

Multilevel Monte Carlo (MLMC)~\cite{Giles2008,Giles2015} is one example of a multifidelity estimation approach. 
In its original formulation, continuous-time stochastic differential equations are simulated using progressively finer, more accurate, discretisations.
For a given computational budget, the statistical error of a Monte Carlo estimate can be reduced by using variance reduction techniques that combine estimates built from simulations at different discretisations with common input noise.
The key aim of MLMC implementation is to optimise the number of simulations using each of the different discretisations to reduce the estimator's variance.

Previous work has exploited multifidelity approaches to parameter inference~\cite{Cui2015,Cui2016}.
A multilevel approach to ABC is considered in \cite{Warne2018}, where a set of ABC samples of increasing simulation cost is produced by using progressively stricter rejection sampling thresholds, chosen to optimise the efficiency of building the overall sample.
In approximate ABC~\cite{Buzbas2015} (aABC) a small number of simulations are used to create a low-fidelity statistical surrogate of the model output across parameter space, to which ABC is applied.
Other examples include Lazy ABC~\cite{Prangle2016} and Delayed Acceptance ABC~\cite{Christen2005}, where low-fidelity simulations are used to decide whether the parameter can be rejected, without necessarily needing to simulate from a high-fidelity model. 


In this paper we apply multifidelity model management ideas to the specific case of rejection sampling ABC.
We present a new method that allows a reject/accept decision to be made for a parameter sample using a low-fidelity simulation alone, i.e. without necessarily requiring a corresponding high-fidelity simulation.
\Cref{s:abc} introduces ABC and the motivation for multifidelity approaches.
We develop these into a new multifidelity rejection sampling algorithm in \Cref{s:multifidelity}. 
In \Cref{ss:comparing} we describe how to analyse the performance of this algorithm and optimise its inputs. 
The theoretical work is illustrated by applying the multifidelity rejection sampling algorithm to a stochastic synthetic biology model in \Cref{s:repressilator}, which is also used in \Cref{s:implementation} to illustrate practical issues around implementation. 
We consider a second example in \Cref{ss:viral}, and conclude with a view of potential future developments in \Cref{s:end}.
Our code, implemented in Julia, is available at \url{https://github.com/tpprescott/mf-abc}.

\section{ABC and estimators}
\label{s:abc}

The goal of Bayesian parameter estimation is to update prior beliefs about model parameters, $\theta$, encoded in a prior distribution $\pi(\theta)$.
The updates depend on experimental observations, $y_{\mathrm{obs}}$, subject to stochasticity such as measurement and environmental noise.
The parameterised model is denoted $p(\cdot~|~\theta)$, which defines a likelihood, $p(y_{\mathrm{obs}}~|~\theta)$.
The likelihood is combined with the prior distribution to give the posterior distribution, $p(\theta ~|~ y_{\mathrm{obs}}) \propto p(y_{\mathrm{obs}} ~|~ \theta) \pi(\theta)$.
We assume that the likelihood is not available, and that we need to use ABC to estimate the posterior distribution. 

The simplest version of ABC approximates the likelihood, $p(y_{\mathrm{obs}}~|~\theta)$, of observing $y_{\mathrm{obs}}$ under the model, based on the simulations of the model being in some sense \emph{close enough} to $y_{\mathrm{obs}}$.
This gives the approximate posterior,
\begin{equation}
p_{\mathrm{ABC}}(\theta ~|~ y_{\mathrm{obs}}) = \frac{p(d(y,y_{\mathrm{obs}})<\epsilon ~|~ \theta) \pi(\theta)}{Z}
= \frac{ p(y \in \Omega(\epsilon) ~|~ \theta) \pi(\theta)}{Z} ,
\label{eq:abcposterior}
\end{equation}
where the normalisation constant $Z$ ensures the distribution has unit integral, and $\Omega(\epsilon) = \{ y~|~ d(y,y_{\mathrm{obs}})<\epsilon) \}$ is the $\epsilon$-close neighbourhood of $y_{\mathrm{obs}}$, where $d(y,y_{\mathrm{obs}})$ is a distance measure between the observations, $y_{\mathrm{obs}}$, and model outputs, $y$. 
The approximate posterior also induces an expectation,
\[
\mathbb E_{\mathrm{ABC}}(F(\theta)~|~y_{\mathrm{obs}}) = \int F(\theta)~ p_{\mathrm{ABC}}(\theta~|~y_{\mathrm{obs}}) ~\mathrm d\theta,
\]
which is the ABC approximation to the posterior expectation of an arbitrary function $F$.

The value of $p(y \in \Omega(\epsilon) ~|~ \theta)$ is typically estimated using simulation.
Given $\theta \sim \pi(\cdot)$ sampled from the prior distribution, we simulate $y \sim p(\cdot ~|~ \theta)$ from the model and calculate a weight $w(\theta) = \mathbb I( y \in \Omega(\epsilon))$. 
If we consider $w(\theta)=0$ as rejection and $w(\theta)=1$ as acceptance of $\theta$, the parameter is accepted (resp. rejected) if it generates summary statistics that are close to (resp. far from) the observed data.
Taking the expectation over $y \sim p(\cdot ~|~ \theta)$ gives $\mathbb E(w(\theta)~|~\theta) = p( y \in \Omega(\epsilon) ~|~ \theta)$.
Thus, $w(\theta)$ is an unbiased estimator of the ABC approximation to the likelihood.

\begin{algorithm}
 \caption{Rejection sampling ABC}
 \label{a:abc}
 \begin{algorithmic}
  \STATE{Input: observed measurements $y_{\mathrm{obs}}$; prior $\pi(\cdot)$; function $F(\theta)$; model $p(\cdot~|~\theta)$; distance function $d(\cdot,y_{\mathrm{obs}})$; threshold $\epsilon$; Monte Carlo sample size $N$.}
  \STATE{~}
  \FOR{$i = 1,\dots,N$}
   \STATE{Generate $\theta_i \sim \pi(\cdot)$.}
   \STATE{Simulate $y_i \sim p(\cdot~|~\theta_i)$ from the model.}
   \STATE{Calculate $w_i = w(\theta_i) = \mathbb I(d(y_i,y_{\mathrm{obs}}) < \epsilon)$.}
  \ENDFOR{}
  \STATE{Calculate $\mu_{\mathrm{ABC}}(F) = \sum_{i=1}^N w_i F(\theta_i) / \sum_{j=1}^N w_j$.}
  \RETURN{$\{w_i,\theta_i\}$ and $\mu_{\mathrm{ABC}}(F)$.}
 \end{algorithmic}
\end{algorithm}

The weights $w(\theta)$ can be used in a Monte Carlo algorithm to build a weighted sample $\{w_i, \theta_i\}$.
The simplest approach, the ABC Rejection Sampler (\cref{a:abc}) involves independently generating $\theta_i \sim \pi(\cdot)$ for $i=1,\dots,N$, and setting $w_i = w(\theta_i)$. 
The estimator calculated by \cref{a:abc} is
\begin{equation}
\label{eq:estimator}
 \mu_{\mathrm{ABC}}(F) = \frac{\sum_i w(\theta_i) F(\theta_i) / N}{\sum_j w(\theta_j)/N} \approx \frac{Z \mathbb E_{\mathrm{ABC}}(F(\theta)~|~y_{\mathrm{obs}})}{Z} =  \mathbb E_{\mathrm{ABC}}(F(\theta)~|~y_{\mathrm{obs}}).
\end{equation}
The numerator and denominator of $\mu_{\mathrm{ABC}}$ are each unbiased estimators of $Z \mathbb E_{\mathrm{ABC}}(F(\theta)~|~y_{\mathrm{obs}})$ and $Z$, respectively. 
Although the ratio is not an unbiased estimator of $\mathbb E_{\mathrm{ABC}}(F~|~y_{\mathrm{obs}})$, the bias of $\mu_{\mathrm{ABC}}(F)$ vanishes as the sample size $N$ becomes large~\cite{Sisson2018}.

The key issue with \cref{a:abc} is that a large number, $N$, of simulations $y_i \sim p(\cdot~|~\theta_i)$ are required to generate an accurate approximation of $\mathbb E_{\mathrm{ABC}}(F(\theta)~|~y_{\mathrm{obs}})$. 
%
Rather than aiming to reduce the number, $N$, of simulations~\cite{Marjoram2003,Toni2009}, this paper considers the use of computationally cheap approximations to $w(\theta) = \mathbb I(y \in \Omega(\epsilon))$. 
The goal is to reduce the computational burden of producing $\mu_{\mathrm{ABC}}(F)$ for any fixed number, $N$, of Monte Carlo sample points. 

\section{Multifidelity approximate Bayesian computation}
\label{s:multifidelity}

To reduce the computational cost of rejection sampling ABC, we will exploit the concept of multifidelity modelling~\cite{Peherstorfer2018}. 
The high-fidelity `ground truth' model, $p(\cdot~|~\theta)$, is assumed to be a computationally expensive, accurate representation of the observed system.
We consider the model to be a map from parameter sample $\theta$ to a distribution on an output space containing the observations, $y_{\mathrm{obs}}$.
A simulation from the high-fidelity model (i.e. a high-fidelity simulation) for a particular $\theta$ is a draw $y \sim p(\cdot~|~\theta)$ from this distribution, the computational cost of which is denoted by $c(\theta)$.

We also consider a \emph{low-fidelity} model, $\tilde p(\cdot~|~\theta)$, which is an alternative map from the parameter sample $\theta$ to a distribution on an output space.
Note that the output space of the low-fidelity model may be different to that of the high-fidelity model:
we assume that the output space is induced by taking potentially different measurements $\tilde y_{\mathrm{obs}}$ from the same experiment generating the measurements comprising $y_{\mathrm{obs}}$.
A simulation from the low-fidelity model (i.e. a low-fidelity simulation) is a draw $\tilde y \sim \tilde p(\cdot~|~\theta)$, the computational cost of which is denoted $\tilde c(\theta)$.
We will assume that low-fidelity simulations are, on average, much cheaper than high-fidelity simulations, such that $\mathbb E(\tilde c(\theta)) \ll \mathbb E(c(\theta))$.
In direct analogy with \Cref{a:abc}, we define a distance function $\tilde d(\tilde y, \tilde y_{\mathrm{obs}})$, measuring how close the simulated data is to the observed data, and a threshold $\tilde \epsilon$.
These define a weight $\tilde w(\theta) = \mathbb I(\tilde y \in \tilde \Omega(\tilde \epsilon))$, where we write $\tilde \Omega(\tilde \epsilon) = \{ \tilde y ~|~ \tilde d(\tilde y,\tilde y_{\mathrm{obs}})<\tilde \epsilon) \}$ for the neighbourhood of the data. 

The sample $\{ \tilde w(\theta_i), \theta_i \}$ will be built more quickly than $\{ w(\theta_i),\theta_i \}$, for a fixed $N$.
However, this computational speedup comes at the cost of bias, which arises because the likelihood of the low-fidelity model does not equal that of the high-fidelity model. 
The ABC approximations to each likelihood are also not identical, since
$
 \mathbb E(\tilde w(\theta)) = \tilde p(\tilde y \in \tilde \Omega(\tilde \epsilon)~|~\theta) \neq p(y \in \Omega(\epsilon)~|~\theta) = \mathbb E(w(\theta))
$.
The bias is compounded by the fact that the observations $y_{\mathrm{obs}}$ and $\tilde y_{\mathrm{obs}}$, distance functions $d$ and $\tilde d$, and thresholds $\epsilon$ and $\tilde \epsilon$, may be specified independently of one another.

The goal of the remainder of this section is to consider how best to use the information generated by the low-fidelity model to reduce the reliance on the high-fidelity model in estimating $p_{\mathrm{ABC}}(\theta~|~y_{\mathrm{obs}})$.
We aim to produce an unbiased estimate of the ABC approximation to the likelihood generated by the high-fidelity model, $p(y \in \Omega(\epsilon)~|~\theta) \approx p(y_{\mathrm{obs}}~|~\theta)$.

\subsection{Early rejection ABC}
\label{ss:lz}

As a starting point, we will describe an existing approach that uses the low-fidelity model, $\tilde p(\cdot~|~\theta)$, to reduce the cost of calculating an unbiased estimator of $p(y \in \Omega(\epsilon)~|~\theta)$.
A version of this approach is used in \emph{lazy ABC}~\cite{Prangle2016} and is also the key idea of \emph{delayed acceptance MCMC}~\cite{Christen2005}, but here we will refer to it as \emph{early rejection} ABC.
Recall that the weight $w(\theta)$ is an unbiased estimator of the ABC approximation to the likelihood, $p(y \in \Omega(\epsilon)~|~\theta)$, and requires a simulation of the high-fidelity model.
The early rejection ABC approach generates an alternative unbiased estimator, $w_{\mathrm{er}}(\theta)$, which saves computational costs by using the result of the low-fidelity simulation to decide whether to simulate the high-fidelity model, or reject the parameter early.

For a sample $\theta$ from the prior, we first simulate $\tilde y  \sim \tilde p(\cdot~|~\theta)$ from the low-fidelity model at a cost $\tilde c$.
A continuation probability $\eta(\tilde y) \in (0,1]$ is then defined, dependent on the result of the low-fidelity simulation.
With probability $1-\eta(\tilde y)$, the parameter is \emph{rejected early}: 
without simulating $y \sim p(\cdot~|~\theta)$, and therefore avoiding simulation cost $c$, the weight is set to $w_{\mathrm{er}} = 0$.
Otherwise, the high-fidelity simulation $y \sim p(\cdot~|~\theta)$ is generated and the parameter is accepted or rejected according to $\mathbb I(y \in \Omega(\epsilon))$, as before. 
If accepted, however, the weight is set to $w_{\mathrm{er}} = 1/\eta(\tilde y)$ rather than $1$.
For the uniform random variable $U \sim \mathrm{Unif}(0,1)$, we can write
\begin{equation}
\label{eq:lzweight}
 w_{\mathrm{er}}(\theta) = \frac{\mathbb I(U < \eta(\tilde y))}{\eta(\tilde y)} \mathbb I(y \in \Omega(\epsilon)).
\end{equation}
Taking the expectation with respect to $U$ recovers $w(\theta)$, and hence $\mathbb E(w_{\mathrm{er}}(\theta)) = \mathbb E(w(\theta)) = p(y \in \Omega(\epsilon)~|~\theta)$. 
Thus the early rejection estimate, $w_{\mathrm{er}}(\theta)$, is unbiased.

The improved performance of early rejection ABC relies on the low-fidelity simulation output, $\tilde y$, being informative about the high-fidelity simulation output, $y$, and on the careful definition of the continuation probabilities $\eta(\tilde y)$.
Firstly, as we will show in \Cref{ss:essefficiency}, the expected time taken to compute $w_{\mathrm{er}}(\theta)$ is less than for $w(\theta)$ if $\mathbb E(\eta(\tilde y)) < 1-\mathbb E(\tilde c)/\mathbb E(c)$.
Furthermore, suppose that $\tilde y$ is such that, with high probability, $y \notin \Omega(\epsilon)$ and hence $\theta$ will be rejected.
Rather than generate $\mathbb I(y \in \Omega(\epsilon))$ at cost $c$, it would be preferable to reject $\theta$ early.
For such $\tilde y$, this is achieved by ensuring $\eta(\tilde y)$ is small.
Conversely, if $\tilde y$ is such that, with high probability, $y \in \Omega(\epsilon)$ then $\theta$ is more likely to be accepted, corresponding to a positive value of $w_{\mathrm{er}}(\theta)$. 
It follows that $\eta(\tilde y)$ should be larger, allowing a positive weight, meaning that $y$ is more likely to be generated. 
However, the converse uncovers an important asymmetry underlying the early rejection approach. 
If $\tilde y$ is such that $y \in \Omega(\epsilon)$ with high probability, then an efficient approach could be to assign a positive weight to $\theta$ without simulating $y \sim p(\cdot~|~\theta)$ from the high-fidelity model.
However, such an \emph{early acceptance} is not possible within the framework of early rejection.

\subsection{Early decision ABC}
\label{ss:multilevel}

Instead of using $\tilde y$ to determine whether or not to simulate the high-fidelity model, we now assume that this decision is independent of $\tilde y$.
We can instead use $\tilde y$ to determine the weight for $\theta$ if the high-fidelity model is not simulated.
As with early rejection, for a given $\theta$ we first simulate $\tilde y \sim \tilde p(\cdot~|~\theta)$ from the low-fidelity model. 
Now suppose a continuation probability $\eta \in (0,1]$ is fixed (independently of $\tilde y$).
Then, with probability $1-\eta$, the parameter $\theta$ is accepted or rejected based on the early decision, $\mathbb I(\tilde y \in \tilde \Omega(\tilde \epsilon))$, without simulating the high-fidelity model and thus avoiding cost $c$.
Otherwise, with probability $\eta$, we simulate $y \sim p(\cdot~|~\theta)$ from the high-fidelity model and calculate $\mathbb I(y \in \Omega(\epsilon))$ to determine acceptance or rejection, as before. 
The appropriate weight for $\theta$ is
\begin{equation}
\label{eq:mlweight}
 w_{\mathrm{ed}}(\theta) = \mathbb I(\tilde y \in \tilde \Omega(\tilde \epsilon)) + \frac{\mathbb I(U<\eta)}{\eta} \left[ \mathbb I( y \in  \Omega( \epsilon)) - \mathbb I(\tilde y \in \tilde \Omega(\tilde \epsilon)) \right],
\end{equation}
where, again, taking the expectation over $U \sim \mathrm{Unif}(0,1)$ recovers $w(\theta)$. 
Thus, $w_{\mathrm{ed}}(\theta)$ is another unbiased estimator for $p(y \in \Omega(\epsilon)~|~\theta)$. 
Note that we can consider $w_{\mathrm{ed}}(\theta)$ as a multilevel weight, since it is a randomised multilevel estimator for $p(y \in \Omega(\epsilon)~|~\theta)$~\cite{Rhee2015}. 

Note that, by allowing early acceptance, $w_{\mathrm{ed}}(\theta)$ can take negative values.
In particular, if we simulate $U \leq \eta$, $\tilde y \in \tilde \Omega(\tilde \epsilon)$, and $y \notin \Omega(\epsilon)$, then $w_{\mathrm{ed}} = 1 - 1/\eta \leq 0$. 
This is a necessary consequence of early acceptance, which may overestimate the posterior weight on $\theta$ where $\tilde y \in \tilde \Omega(\tilde \epsilon)$. 
Negative weights means that the constructed set $\{ w_{\mathrm{ed}}(\theta_i), \theta_i \}$ cannot be interpreted as a weighted sample from the ABC posterior. 
Nevertheless, it is still valid to use $\{ w_{\mathrm{ed}}(\theta_i), \theta_i \}$ in the estimator $\mu_{\mathrm{ABC}}(F)$.

\subsection{Multifidelity ABC: early acceptance and early rejection}
\label{ss:rates}

We are now in a position to introduce early accept/reject multifidelity ABC. 
The approaches discussed in \Cref{ss:lz,ss:multilevel} use the low-fidelity simulation output, $\tilde y \sim \tilde p(\cdot~|~\theta)$, in different ways. 
The early rejection weight $w_{\mathrm{er}}(\theta)$ uses $\tilde y$ to determine whether to simulate the high-fidelity model.
In contrast, when calculating the early decision weight $w_{\mathrm{ed}}(\theta)$, we determine whether to simulate the high-fidelity model independently of $\tilde y$.
However, $w_{\mathrm{ed}}$ uses $\tilde y$ to determine the early decision that is to be made (i.e. accept or reject $\theta$) if the high-fidelity model is not simulated.
The following expression combines these ideas in a, more general, \emph{multifidelity} weight,
\begin{equation}
\label{eq:slweight}
 w_{\mathrm{mf}}(\theta) = \mathbb I(\tilde y \in \tilde \Omega(\tilde \epsilon)) + \frac{\mathbb I(U<\eta(\tilde y))}{\eta(\tilde y)} \left[ \mathbb I( y \in  \Omega( \epsilon)) - \mathbb I(\tilde y \in \tilde \Omega(\tilde \epsilon)) \right],
\end{equation}
where the continuation probability and early decision both depend on the output of the low-fidelity simulation.

As with early rejection ABC, the choice of continuation probability $\eta(\tilde y)$ is important to the performance of $w_{\mathrm{mf}}$.
A natural form of continuation probability, and the one we consider here, is
\begin{equation}
 \eta(\tilde y) =  \eta_1 \mathbb I( \tilde y \in \tilde \Omega(\tilde \epsilon)) + \eta_2 \mathbb I( \tilde y \notin \tilde \Omega(\tilde \epsilon)).
 \label{eq:slweight_contprob}
\end{equation}
This choice of $\eta(\tilde y)$ allows both early acceptance and early rejection with constant probabilities $1-\eta_1$ and $1-\eta_2$, respectively.
We will therefore refer to using $w_{\mathrm{mf}}$ and $\eta(\tilde y)$ given by \Cref{eq:slweight,eq:slweight_contprob} as \emph{early accept/reject multifidelity ABC}.
Note that constraining $\eta_1 = \eta_2$ makes $\eta$ independent of $\tilde y$ and recovers the early decision weight $w_{\mathrm{ed}}$. 
Fixing $\eta_1 = 1$ means that there is no early acceptance, and recovers the early rejection weight $w_{\mathrm{er}}$. 
Finally, putting $\eta_1 = \eta_2 = 1$ recovers the original ABC rejection sampling weight $w$.

Using $\eta(\tilde y)$ in \Cref{eq:slweight_contprob} means that $w_{\mathrm{mf}}(\theta)$ can take one of only four possible values:
\begin{equation}
\label{eq:mfcases}
 w_{\mathrm{mf}}(\theta) = \begin{cases}
                   1 &\tilde y \in \tilde \Omega(\tilde \epsilon) \cap U \geq \eta_1 \text{ (early accept)} \\
                   0 &\tilde y \notin \tilde \Omega(\tilde \epsilon) \cap U \geq \eta_2 \text{ (early reject)}\\
                   1 &\tilde y \in \tilde \Omega(\tilde \epsilon) \cap y \in \Omega(\epsilon) \cap U < \eta_1 \text{ (checked true positive)}\\
                   0 &\tilde y \notin \tilde \Omega(\tilde \epsilon) \cap y \notin \Omega(\epsilon) \cap U < \eta_2 \text{ (checked true negative)}\\
                   1-1/\eta_1 &\tilde y \in \tilde \Omega(\tilde \epsilon) \cap y \notin \Omega(\epsilon) \cap U < \eta_1 \text{ (checked false positive)}\\
                   0+1/\eta_2 &\tilde y \notin \tilde \Omega(\tilde \epsilon) \cap y \in \Omega(\epsilon) \cap U < \eta_2 \text{ (checked false negative)}.
                  \end{cases}
\end{equation}
These cases imply the implementation, \cref{a:sl}, of a Monte Carlo algorithm to estimate $\mathbb E_{\mathrm{ABC}}(F(\theta)~|~y_{\mathrm{obs}})$.
They also have the interesting consequence that, in addition to computational speedup, the performance of \cref{a:sl} will be dependent on the Receiver Operating Characteristics (ROC)~\cite{ROC} of the cheap, biased binary classifier $\tilde w(\theta) = \mathbb I(\tilde y \in \tilde \Omega(\tilde \epsilon))$ as an approximation of the expensive binary classifier $w(\theta) = \mathbb I(y \in \Omega(\epsilon))$.

\begin{algorithm}
\caption{Early accept/reject multifidelity ABC}
\label{a:sl}
\begin{algorithmic}
\STATE{Input: observations $y_{\mathrm{obs}}$ and $\tilde y_{\mathrm{obs}}$ from a common experiment; prior $\pi(\cdot)$; function $F(\theta)$; low- and high-fidelity models $\tilde p(\cdot~|~\theta)$ and $p(\cdot~|~\theta)$; distance functions $\tilde d(\cdot,\tilde y_{\mathrm{obs}})$ and $d(\cdot,y_{\mathrm{obs}})$; thresholds $\tilde \epsilon$ and $\epsilon$; continuation probabilities $\eta_1$ and $\eta_2$; Monte Carlo sample size $N$.}
\STATE{~}
\FOR{$i = 1,\dots,N$}
    \STATE{Generate $\theta_i \sim \pi(\cdot)$ and $U \sim \mathrm{Unif}(0,1)$.}
    \STATE{Simulate $\tilde y_i \sim \tilde p(\cdot~|~\theta_i)$ from low-fidelity model.}
    \STATE{Calculate $\tilde w = \mathbb I(\tilde d(\tilde y_i, \tilde y_{\mathrm{obs}})<\tilde \epsilon)$.}
    \STATE{Set $w_i = \tilde w$.}
    \STATE{Set $\eta = \eta_1 \tilde w + \eta_2 (1-\tilde w)$.}
    \IF{$U<\eta$}
        \STATE{Simulate $y_i \sim p(\cdot~|~\theta_i)$ from the high-fidelity model.}
        \STATE{Calculate $w = \mathbb I(d(y_i,y_{\mathrm{obs}})<\epsilon)$.}
        \STATE{Update $w_i = w_i + (w - w_i)/\eta$.}
    \ENDIF
\ENDFOR
\STATE{Calculate $\mu_{\mathrm{ABC}}(F) = \sum_{i=1}^N w_i F(\theta_i) / \sum_{i=1}^N w_i$.}
\RETURN $\{ w_i, \theta_i \}$ and $\mu_{\mathrm{ABC}}(F)$.
\end{algorithmic}
\end{algorithm}

In common with many rejection-sampling approaches, this algorithm is embarrassingly parallel: the for-loop can be implemented across many independent workers.
Furthermore, rejection sampling ABC often relies on a threshold value being specified \emph{a posteriori} to ensure a specific acceptance rate; the distances are ranked and $\epsilon$ is chosen so that the parameter proposals corresponding to the smallest quantile of distances are taken into the sample.
In this setting, we could adapt the algorithm above into two serial components (each of which can still be parallelised).
The first component applies the \emph{a posteriori} thresholding approach to the low-fidelity model alone, giving weights 0 or 1 to each proposed parameter.
In the second, the high-fidelity model is simulated using a random subset of the parameter proposals, chosen based on the continuation probabilities.
The weights are then corrected to give $w_{\mathrm{mf}}$: at this point $\epsilon$ can be chosen to achieve a desired effective sample size (introduced in the next section).
For simplicity, we only consider the case of fixed $\tilde \epsilon$ and $\epsilon$ in the following.

\section{Performance of early accept/reject multifidelity ABC}
\label{ss:comparing}

This section considers the performance of \cref{a:sl} in constructing the Monte Carlo sample $\{ w_{\mathrm{mf}}(\theta_i),\theta_i \}$. 
We discuss how to define the sample quality, and thus how to choose the inputs $(\eta_1,\eta_2)$ to optimise performance.
We will show that the multifidelity approach provides improved performance over rejection sampling ABC, and that early acceptance adds to the benefit of early rejection. 

\subsection{Effective sample size and efficiency}
\label{ss:essefficiency}

Consider a weighted sample $\{ w_i ,\theta_i \}$ output from an importance sampling algorithm. 
The weights $w_i$ correspond to any weighting, for example $w(\theta_i)$ or $w_{\mathrm{mf}}(\theta_i)$. 
We denote the random variable taking values $w_i$ by $W$.
A common measure of the quality of such a sample is its effective sample size (ESS), defined as
\begin{equation}
\label{eq:ESS}
 \mathrm{ESS} = \frac{(\sum_i w_i)^2}{\sum_i w_i^2} = N \frac{(\sum_i w_i/N)^2}{\sum_i w_i^2/N} \approx N \frac{\mathbb E(W)^2}{\mathbb E(W^2)},
\end{equation}
where the approximation is taken in the limit as $N \rightarrow \infty$, and the expectations are across the proposal distribution: in this case, the prior parameter distribution, $\theta \sim \pi(\cdot)$. 
Note that ESS is inversely proportional to a first order approximation of the variance of $\mu_{ABC}(F)$ output by \cref{a:sl}, for any $F$: see the supplementary material \Cref{s:ESS} for more details.
Hence, we will use ESS without requiring $w_i \geq 0$ for all $i$.

\begin{proposition}
Assume one or both of the following holds:
\begin{enumerate}
\item $\eta_1 < 1$ and the false positive probability, $\mathbb P \left( \left\{ \tilde y \in \tilde \Omega(\tilde \epsilon) \right\} \cap \left\{ y \notin \Omega(\epsilon) \right\} \right) > 0$;
\item $\eta_2 < 1$ and the false negative probability $\mathbb P \left( \left\{ \tilde y \notin \tilde \Omega(\tilde \epsilon) \right\} \cap \left\{ y \in \Omega(\epsilon) \right\} \right) > 0$.
\end{enumerate}
In the limit as $N \rightarrow \infty$, the $\mathrm{ESS}$ of the weighted sample $\{w_{\mathrm{mf}}(\theta_i), \theta_i\}$ is smaller than the ESS of the weighted sample $\{w(\theta_i), \theta_i\}$.
\end{proposition}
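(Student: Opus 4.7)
The plan is to reduce the claim to a direct comparison of second moments. By the limiting formula in \Cref{eq:ESS}, as $N\to\infty$ the ESS of a sample with weights distributed as $W$ is proportional to $\mathbb E(W)^2/\mathbb E(W^2)$. Since $w_{\mathrm{mf}}$ is an unbiased estimator of $p(y\in\Omega(\epsilon)\mid\theta)$ (as noted after \Cref{eq:slweight}), taking expectations over the prior gives $\mathbb E(w_{\mathrm{mf}}(\theta)) = \mathbb E(w(\theta))$, so the numerators of the two limiting ESS expressions agree. The proposition therefore reduces to showing
\[
\mathbb E(w_{\mathrm{mf}}(\theta)^2) > \mathbb E(w(\theta)^2).
\]
Note that $\mathbb E(w(\theta)^2) = \mathbb E(w(\theta))$ because $w(\theta) \in \{0,1\}$.

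To compute $\mathbb E(w_{\mathrm{mf}}^2)$, I would enumerate the six mutually exclusive outcomes in \Cref{eq:mfcases}. Writing $p_{TP}, p_{FP}, p_{FN}, p_{TN}$ for the probabilities of the four joint events of $\{\tilde y\in \tilde\Omega(\tilde \epsilon)\}$ and $\{y\in\Omega(\epsilon)\}$ (and using the independence of $U$ from $(\tilde y, y)$), the nonzero contributions to $\mathbb E(w_{\mathrm{mf}}^2)$ come from the early-accept case (probability $(1-\eta_1)(p_{TP}+p_{FP})$, squared weight $1$), the checked true-positive case (probability $\eta_1 p_{TP}$, squared weight $1$), the checked false-positive case (probability $\eta_1 p_{FP}$, squared weight $(1-1/\eta_1)^2$), and the checked false-negative case (probability $\eta_2 p_{FN}$, squared weight $1/\eta_2^2$). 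A brief algebraic simplification collapses these four terms to
\[
\mathbb E(w_{\mathrm{mf}}^2) \;=\; p_{TP} + p_{FP}\!\left(\tfrac{1}{\eta_1}-1\right) + \tfrac{1}{\eta_2}\, p_{FN}.
\]

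Subtracting $\mathbb E(w^2) = p_{TP} + p_{FN}$ then yields
\[
\mathbb E(w_{\mathrm{mf}}^2) - \mathbb E(w^2) \;=\; \left(\tfrac{1}{\eta_1}-1\right) p_{FP} \;+\; \left(\tfrac{1}{\eta_2}-1\right) p_{FN},
\]
which is a sum of two nonnegative terms because $\eta_1,\eta_2 \in (0,1]$. Under hypothesis~(1), the first term is strictly positive, and under hypothesis~(2) the second term is; either way the difference is strictly positive, so $\mathbb E(w_{\mathrm{mf}}^2) > \mathbb E(w^2)$, and consequently the limiting ESS ratio is strictly smaller for $\{w_{\mathrm{mf}}(\theta_i),\theta_i\}$ than for $\{w(\theta_i),\theta_i\}$.

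There is no genuine obstacle: the argument is a bookkeeping exercise once one recognises (i) that unbiasedness takes care of the numerator in \Cref{eq:ESS}, and (ii) that the binary nature of $w$ collapses $\mathbb E(w^2)$ to $\mathbb E(w)$. The only care required is in enumerating the six cases of \Cref{eq:mfcases} without double-counting, and in tracking the contribution of the negative-valued checked-false-positive case, whose squared weight $(1-1/\eta_1)^2$ is what ultimately drives the increase in second moment relative to the standard estimator.
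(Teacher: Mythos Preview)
Your proof is correct and follows essentially the same approach as the paper: both reduce the ESS comparison to the second-moment inequality via the common first moment, compute $\mathbb E(w_{\mathrm{mf}}^2)$ from the case enumeration in \Cref{eq:mfcases}, and obtain the identity $\mathbb E(w_{\mathrm{mf}}^2) - \mathbb E(w^2) = (\tfrac{1}{\eta_1}-1)p_{FP} + (\tfrac{1}{\eta_2}-1)p_{FN}$ (the paper's \Cref{eq:mf_m2}). Your version simply spells out the case-by-case bookkeeping that the paper leaves implicit.
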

\begin{proof}
The conditional expectations $\mathbb E(w(\theta)) =\mathbb E(w_{\mathrm{mf}}(\theta)) = p(y \in \Omega(\epsilon)~|~\theta)$ are equal and unbiased. 
It follows that $\mathbb E(w) = \mathbb E(w_{\mathrm{mf}}) = p(y \in \Omega(\epsilon))$, and hence that the numerators of the limiting value of the ESS in \Cref{eq:ESS} are equal for $w_i = w(\theta_i)$ and $w_i = w_{\mathrm{mf}}(\theta_i)$.

It can be shown that $\mathbb E(w^2) = Z = p(y \in \Omega(\epsilon))$. 
Using \Cref{eq:mfcases} and taking expectations, we find
\begin{align}
\mathbb E(w_{\mathrm{mf}}^2) = \mathbb E(w^2) 
    &+ \left( \frac{1}{\eta_1} - 1 \right) \mathbb P \left( \left\{ \tilde y \in \tilde \Omega(\tilde \epsilon) \right\} \cap \left\{ y \notin \Omega(\epsilon) \right\} \right) \nonumber \\
    &+ \left( \frac{1}{\eta_2} - 1 \right) \mathbb P \left( \left\{ \tilde y \notin \tilde \Omega(\tilde \epsilon) \right\} \cap \left\{ y \in \Omega(\epsilon) \right\} \right).
\label{eq:mf_m2}    
\end{align}
Assuming at least one of the two conditions in the statement gives $\mathbb E(w_{\mathrm{mf}}^2) > \mathbb E(w^2)$.
The result follows from this inequality.
\end{proof}

The goal of the multifidelity approach to rejection sampling is to build a sample more efficiently than with standard rejection sampling ABC. 
The smaller ESS produced by \cref{a:sl} is the cost of early acceptance and early rejection.
\Cref{eq:mf_m2} shows that the marginal cost of decreasing either $\eta_1$ or $\eta_2$ is dependent on the probability of either a false positive or false negative, respectively. 
Clearly, if the approximation $\tilde y$ is a good one for $y$ (in terms of the set membership $\tilde y \in \tilde \Omega(\tilde \epsilon)$ as a predictor of $y \in \Omega(\epsilon)$) then the cost of early acceptance or early rejection is reduced. 

Having shown that a smaller ESS is the cost of early acceptance and early rejection, we can now show how this is balanced against the intended benefit of reducing computational burden. 
Suppose that $T_i$ is the time taken to generate the weight $w_i$, with total simulation time $T_{\mathrm{tot}} = \sum_i T_i$.
A measure of the efficiency of building the sample $\{ w_i, \theta_i \}$ is the ratio of ESS to total simulation time,
\begin{equation}
\frac{\mathrm{ESS}}{T_{\mathrm{tot}}} = \frac{(\sum_i w_i/N)^2}{(\sum_i w_i^2/N) (\sum_i T_i/N)}
\approx \frac{\mathbb E(W)^2}{\mathbb E(W^2) \mathbb E(T)},
\label{eq:efficiency_def}
\end{equation}
where we have considered the limit as $N \rightarrow \infty$ and the expectations are taken across $\theta \sim \pi(\cdot)$.

The expected cost $\mathbb E(T)$ of computing $w_{\mathrm{mf}}(\theta)$ over $\theta \sim \pi(\cdot)$ is 
\[
\mathbb E(T) = \mathbb E(\tilde c) + 
\eta_1 \mathbb E(c ~|~ \tilde y \in \tilde \Omega(\tilde \epsilon)) \mathbb P(\tilde y \in \tilde \Omega(\tilde \epsilon)) + 
\eta_2 \mathbb E(c ~|~ \tilde y \notin \tilde \Omega(\tilde \epsilon)) \mathbb P(\tilde y \notin \tilde \Omega(\tilde \epsilon)),
\]
where $\tilde c(\theta)$ is the simulation cost of $\tilde y \sim \tilde p(\cdot~|~\theta)$ and $c(\theta)$ is that of $y \sim p(\cdot~|~\theta)$.
If $\eta_1,\eta_2 < 1 - (\mathbb E(\tilde c)/\mathbb E(c))$, then the expected simulation time $\mathbb E(T)$ to calculate $w_{\mathrm{mf}}$ is less than the expected cost $\mathbb E(c)$ of calculating $w$.
The computational cost of calculating $w_{\mathrm{mf}}(\theta)$ is decreased for smaller values of $\eta_1,\eta_2$, to a lower bound of $\tilde c(\theta)$.
Hence, the benefit of decreasing $\eta_1$ and $\eta_2$ is a saving in computational cost, traded off against a decrease in the ESS. 

\subsection{Optimal continuation probabilities}
\label{s:eta}

\Cref{a:sl} takes the continuation probabilities $(\eta_1,\eta_2)$ as an input, producing a sample $\{ w_{\mathrm{mf}}(\theta_i),\theta_i \}$. 
We now consider the choice of $(\eta_1,\eta_2)$ that optimally balances the benefit of reducing the simulation time against the cost of reducing the ESS.
Our approach is to choose $(\eta_1,\eta_2)$ to maximise the limiting efficiency of the algorithm, defined in \Cref{eq:efficiency_def} as the ratio $\mathrm{ESS}/T_{\mathrm{tot}}$ as $N \rightarrow \infty$.

The numerator, $\mathbb E(w_{\mathrm{mf}})^2 = p(y \in \Omega(\epsilon))^2$, in \Cref{eq:efficiency_def} is independent of $\eta_1$ and $\eta_2$. 
Therefore the efficiency is maximised when the denominator, $\phi(\eta_1,\eta_2) = \mathbb E(w_{\mathrm{mf}}^2) \mathbb E(T)$, is minimised. We define
\begin{subequations}
\label{eq:efficiency}
\begin{align}
 p_{tp} &= \mathbb P \left( \left\{ \tilde y \in \tilde \Omega(\tilde \epsilon) \right\} \cap \left\{ y \in \Omega(\epsilon) \right\} \right),
 \label{eq:p_tp} \\
 p_{fp} &= \mathbb P \left( \left\{ \tilde y \in \tilde \Omega(\tilde \epsilon) \right\} \cap \left\{ y \notin \Omega(\epsilon) \right\} \right), 
  \label{eq:p_fp} \\
 p_{fn} &= \mathbb P \left( \left\{ \tilde y \notin \tilde \Omega(\tilde \epsilon) \right\} \cap \left\{ y \in \Omega(\epsilon) \right\} \right) ,
 \label{eq:p_fn} \\
 c_p &= \mathbb E( c ~|~\tilde y \in \tilde \Omega(\tilde \epsilon)) \mathbb P(\tilde y \in \tilde \Omega(\tilde \epsilon)),
 \label{eq:delta_p} \\
 c_n &= \mathbb E( c ~|~\tilde y \notin \tilde \Omega(\tilde \epsilon)) \mathbb P(\tilde y \notin \tilde \Omega(\tilde \epsilon)),
  \label{eq:delta_n}
\end{align}
\end{subequations}
to write the objective function
\begin{equation}
 \phi(\eta_1,\eta_2) = \biggl( \bigl(p_{tp} - p_{fp}\bigr) + \frac{1}{\eta_1} p_{fp} + \frac{1}{\eta_2} p_{fn} \biggr) \biggl( \mathbb E(\tilde c) + \eta_1 c_p + \eta_2 c_n \biggr).
\label{eq:ESSphi}
\end{equation}
The false positive and false negative probabilities, $p_{fp}$ and $p_{fn}$, respectively, are the average rates at which simulations from the high- and low-fidelity models are different, defined in terms of being close to the data. 
The average computation time, $\mathbb E(c) = c_p + c_n$, to simulate the high-fidelity model is partitioned conditionally on the value of $\mathbb I(\tilde y \in \tilde \Omega(\tilde \epsilon))$.

\begin{lemma}
\label{l:global}
The denominator $\phi(\eta_1,\eta_2)$ has a unique minimiser on $[0, \infty)^2$ if and only if $\mathbb E\left(\left(w - \tilde w\right)^2\right) < \mathbb E(w^2)$. 
The values of $\eta_1, \eta_2 \geq 0$ that minimise $\phi$ over $[0,\infty)^2$ are 
\begin{equation}
\label{eq:global_optimal_eta}
 (\eta_1^\star, \eta_2^\star) = \left( \sqrt{\frac{R_p}{R_0}}, \sqrt{\frac{R_n}{R_0}} \right),
\end{equation}
where
\begin{align*}
 R_p &= \frac{p_{fp}}{c_p / \mathbb E(\tilde c)} 
 = \frac{\mathbb P(y \notin \Omega(\epsilon) ~|~ \tilde y \in \tilde \Omega(\tilde \epsilon))}{\mathbb E(c~|~\tilde y \in \tilde \Omega(\tilde \epsilon)) / \mathbb E(\tilde c)}, \\ 
 R_n &= \frac{p_{fn}}{c_n / \mathbb E(\tilde c)} 
 = \frac{\mathbb P(y \in \Omega(\epsilon) ~|~ \tilde y \notin \tilde \Omega(\tilde \epsilon))}{\mathbb E(c~|~\tilde y \notin \tilde \Omega(\tilde \epsilon)) / \mathbb E(\tilde c) }, \\
 R_0 &= p_{tp} - p_{fp} 
 = \mathbb E(w^2) - \mathbb E\left( \left(\tilde w - w \right)^2 \right).
 \end{align*}
If $R_0 \leq 0$, then $\nabla \phi \neq 0$ globally.
\end{lemma}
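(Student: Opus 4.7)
My plan is to reduce $\phi$ to a transparent product form and then attack the main case with Cauchy--Schwarz. Introduce shorthand $A = R_0$, $B = p_{fp}$, $C = p_{fn}$, $D = \mathbb E(\tilde c)$, $E = c_p$, $F = c_n$, so that $\phi(\eta_1,\eta_2) = f g$ with $f = A + B/\eta_1 + C/\eta_2$ and $g = D + E\eta_1 + F\eta_2$. Because $w$ and $\tilde w$ are $\{0,1\}$-valued indicators, a direct computation gives $\mathbb E(w^2) = p_{tp} + p_{fn}$ and $\mathbb E\bigl((w-\tilde w)^2\bigr) = p_{fp} + p_{fn}$, so the hypothesis $\mathbb E\bigl((w-\tilde w)^2\bigr) < \mathbb E(w^2)$ is exactly $A = R_0 > 0$. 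This identity also justifies the second representation of $R_0$ stated in the lemma.

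When $R_0 > 0$, I will recognise $f$ and $g$ as squared Euclidean norms of the vectors $u = \bigl(\sqrt{A},\sqrt{B/\eta_1},\sqrt{C/\eta_2}\bigr)$ and $v = \bigl(\sqrt{D},\sqrt{E\eta_1},\sqrt{F\eta_2}\bigr)$. Cauchy--Schwarz then yields
\[
\phi(\eta_1,\eta_2) = |u|^2 |v|^2 \geq (u\cdot v)^2 = \bigl(\sqrt{AD} + \sqrt{BE} + \sqrt{CF}\bigr)^2,
\]
with equality iff $u$ and $v$ are parallel. Proportionality requires $A/D = B/(E\eta_1^2) = C/(F\eta_2^2)$, whose unique positive solution is $\eta_1^\star = \sqrt{BD/(AE)}$ and $\eta_2^\star = \sqrt{CD/(AF)}$; rewriting in the $R_0, R_p, R_n$ notation recovers the formula in the statement. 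Since $\phi \to +\infty$ on the axes $\{\eta_1 = 0\}$ and $\{\eta_2 = 0\}$, this interior minimiser is the unique global minimiser over $[0,\infty)^2$.

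For the converse, assume $R_0 \leq 0$ and compute the interior partial derivatives $\partial_{\eta_1}\phi = Ef - Bg/\eta_1^2$ and $\partial_{\eta_2}\phi = Ff - Cg/\eta_2^2$. At any critical point, $Ef\eta_1^2 = Bg$ and $Ff\eta_2^2 = Cg$; dividing these two relations gives $\eta_1^2/\eta_2^2 = BF/(CE)$, and substituting this ratio together with the expansions of $f$ and $g$ back into the first equation collapses it to the scalar condition $AE\eta_1^2 = BD$. Under the standing assumption that $B, D, E > 0$, this has no solution in $\eta_1 > 0$ whenever $A \leq 0$, so $\nabla\phi$ never vanishes in the interior; together with $\phi = +\infty$ on the boundary axes this establishes $\nabla\phi \neq 0$ globally.

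The chief obstacle I expect is the converse direction: the Cauchy--Schwarz argument is self-contained and delivers existence, uniqueness and optimality simultaneously when $R_0 > 0$, but ruling out critical points when $R_0 \leq 0$ requires the explicit algebraic reduction to $AE\eta_1^2 = BD$. A secondary bookkeeping point is that the argument tacitly assumes $p_{fp}, p_{fn}, c_p, c_n, \mathbb E(\tilde c)$ are strictly positive; degenerate situations such as a perfectly accurate low-fidelity classifier would push one of $\eta_i^\star$ to a boundary and should be flagged as trivialisations of the problem rather than counterexamples to the claim.
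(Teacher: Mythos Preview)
Your proof is correct but takes a genuinely different route from the paper's. The paper expands the product $\phi$ term by term, applies the change of variables $z_i = \eta_i \sqrt{c_i X /(a_i Y)}$ (in the notation $X=R_0$, $Y=\mathbb E(\tilde c)$, etc.), and rewrites the result as a sum of three $\cosh$ terms whose joint minimum is visibly at $z_1 = z_2 = 1$; for $R_0 \le 0$ it argues that $\phi$ is unbounded below (if $R_0<0$) or strictly decreases along rays (if $R_0=0$). Your Cauchy--Schwarz argument is more direct for the main case: recognising $f$ and $g$ as squared norms immediately yields a sharp lower bound and pins down the unique equality point without any substitution. For the converse you instead carry out the explicit gradient reduction to $AE\eta_1^2 = BD$, which has the advantage of establishing the precise claim ``$\nabla\phi\neq 0$ globally'' rather than the weaker (though sufficient) ``no minimiser exists'' that the paper proves. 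Both approaches implicitly assume $p_{fp},p_{fn},c_p,c_n,\mathbb E(\tilde c)>0$; your flagging of the degenerate cases is appropriate and matches the spirit of the paper's treatment.
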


\begin{lemma}
\label{l:boundary}
The value of $\eta_1 \in (0,1]$ that minimises $\phi(\eta_1,1)$ is
\[
 \bar \eta_1 = \min \left\{ 1, \eta_1^\star \bigg/ \sqrt{\frac{1 + p_{fn}/R_0}{1 + c_n/\mathbb E(\tilde c)}} \right\}.
\]
The value of $\eta_2 \in (0,1]$ that minimises $\phi(1,\eta_2)$ is
\[
 \bar \eta_2 = \min \left\{ 1, \eta_2^\star \bigg/ \sqrt{\frac{1 + p_{fp}/R_0}{1 + c_p/\mathbb E(\tilde c)}} \right\}.
\]
\end{lemma}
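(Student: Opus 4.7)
The plan is to reduce each one-dimensional optimisation to the minimisation of a rational function of the form $f(x) = (A + B/x)(C + Dx)$ on $(0,1]$, find its unconstrained minimiser by calculus, and then cap the result at $1$ using convexity.

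First I would substitute $\eta_2 = 1$ into \Cref{eq:ESSphi} and collect terms to obtain
\[
\phi(\eta_1,1) = \left(A + \frac{B}{\eta_1}\right)\left(C + D\eta_1\right) = AC + BD + AD\eta_1 + \frac{BC}{\eta_1},
\]
where $A = R_0 + p_{fn}$, $B = p_{fp}$, $C = \mathbb E(\tilde c) + c_n$, and $D = c_p$. The second derivative $2BC/\eta_1^3 > 0$ shows this function is strictly convex on $(0,\infty)$, and setting its first derivative $AD - BC/\eta_1^2$ to zero produces the unique unconstrained minimiser $\eta_1^{\mathrm{uc}} = \sqrt{BC/(AD)}$. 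A short algebraic rearrangement using the identity $\eta_1^\star = \sqrt{p_{fp}\,\mathbb E(\tilde c) / (R_0 c_p)}$ from \Cref{l:global} gives
\[
\eta_1^{\mathrm{uc}} = \sqrt{\frac{p_{fp}\,(\mathbb E(\tilde c)+c_n)}{(R_0+p_{fn})\,c_p}} = \eta_1^\star \sqrt{\frac{1 + c_n/\mathbb E(\tilde c)}{1 + p_{fn}/R_0}},
\]
which coincides with $\eta_1^\star \big/ \sqrt{(1 + p_{fn}/R_0)/(1 + c_n/\mathbb E(\tilde c))}$ as claimed.

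Finally, because $\phi(\cdot,1)$ is strictly convex and strictly decreasing on $(0, \eta_1^{\mathrm{uc}})$, its minimiser over the constraint set $(0,1]$ equals $\eta_1^{\mathrm{uc}}$ when $\eta_1^{\mathrm{uc}} \leq 1$ and equals the boundary point $\eta_1 = 1$ otherwise; combining these two cases yields $\bar\eta_1 = \min\{1, \eta_1^{\mathrm{uc}}\}$. The formula for $\bar\eta_2$ then follows by an identical argument with the roles of $(\eta_1, p_{fp}, c_p)$ and $(\eta_2, p_{fn}, c_n)$ swapped throughout. There is no real obstacle here beyond bookkeeping; the only subtlety to verify is that $A = R_0 + p_{fn}$ is strictly positive so that $\eta_1^{\mathrm{uc}}$ is well-defined as a real number, which is automatic under the $R_0 > 0$ hypothesis that underpins the companion statement \Cref{l:global}.
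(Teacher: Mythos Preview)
Your proposal is correct and follows essentially the same route as the paper: both fix one variable, reduce $\phi$ to a one-dimensional function of the form $(A+B/x)(C+Dx)$, locate the unconstrained minimiser at $\sqrt{BC/(AD)}$, and then cap at $1$. The only cosmetic difference is that the paper rewrites the one-variable function via the identity $x + 1/x = 2\cosh(\ln x)$ to read off the minimum, whereas you use the second derivative directly to establish strict convexity; your argument is arguably the more elementary of the two.
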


\begin{corollary}
\label{c:optimal-eta}
If $\max \{ R_p, R_n \} \leq R_0$ then the continuation probabilities $\hat \eta_1, \hat \eta_2 \in (0,1]$ that maximise the efficiency, $\mathrm{ESS}/T_{\mathrm{tot}}$, of the sample $\{w_{\mathrm{mf}}(\theta_i), \theta_i \}$ built by \cref{a:sl} are equal to $\eta_1^\star,\eta_2^\star \in (0,1]$ in \Cref{eq:global_optimal_eta}. 
Conversely, if $\max \{ R_p,R_n \} > R_0$ then at least one of $\eta_1^\star, \eta_2^\star > 1$, and the values of $\hat \eta_1,\hat \eta_2 \in (0,1]$ that maximise the efficiency, $\mathrm{ESS}/T_{\mathrm{tot}}$ are:
\[
(\hat \eta_1, \hat \eta_2) = \begin{cases}
\left( 1, \bar \eta_2 \right) & \phi(1,\bar \eta_2) \leq \phi(\bar \eta_1, 1) \\
\left( \bar \eta_1, 1 \right) & \text{else,} 
\end{cases}
\]
where $\bar \eta_1,\bar \eta_2$ are given in \cref{l:boundary}.
\end{corollary}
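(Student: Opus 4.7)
The plan is to reduce the constrained optimisation over $(\eta_1,\eta_2) \in (0,1]^2$ to the unconstrained analysis in \cref{l:global} and the edge analyses in \cref{l:boundary}, and then patch these together case by case. Since the numerator $\mathbb{E}(w_{\mathrm{mf}})^2 = p(y \in \Omega(\epsilon))^2$ of the efficiency in \Cref{eq:efficiency_def} is independent of $(\eta_1,\eta_2)$, maximising efficiency over $(0,1]^2$ is equivalent to minimising $\phi(\eta_1,\eta_2)$ over $(0,1]^2$. I will therefore work with $\phi$ throughout.

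For the first case, suppose $\max\{R_p,R_n\} \leq R_0$. In particular $R_0 > 0$, so \cref{l:global} applies and produces the unique global minimiser $(\eta_1^\star,\eta_2^\star)$ of $\phi$ over $[0,\infty)^2$. The hypothesis $\max\{R_p,R_n\} \leq R_0$ translates directly, via the formula \Cref{eq:global_optimal_eta}, into $\eta_1^\star \leq 1$ and $\eta_2^\star \leq 1$. Hence $(\eta_1^\star,\eta_2^\star) \in (0,1]^2$, and a global minimiser on the larger set $[0,\infty)^2$ is automatically a minimiser on the subset $(0,1]^2$. This gives $(\hat\eta_1,\hat\eta_2) = (\eta_1^\star,\eta_2^\star)$.

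For the second case, $\max\{R_p,R_n\} > R_0$, there are two subcases. If $R_0 > 0$ then at least one of $\eta_1^\star,\eta_2^\star$ exceeds $1$, so the unique unconstrained minimiser lies outside $(0,1]^2$. If $R_0 \leq 0$ then, by the second part of \cref{l:global}, $\phi$ has no interior critical point at all. In either subcase, any minimiser of $\phi$ on the half-closed square $(0,1]^2$ must lie on the boundary portion $\{\eta_1 = 1\} \cup \{\eta_2 = 1\}$: the remaining boundary pieces $\eta_1 \to 0^+$ and $\eta_2 \to 0^+$ are excluded because the factor $p_{fp}/\eta_1 + p_{fn}/\eta_2$ in $\phi$ diverges there (or $\phi$ at least stays bounded away from the infimum on account of the same term; one can make this precise by noting the first factor of \Cref{eq:ESSphi} blows up while the second factor remains bounded). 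On the edge $\eta_2 = 1$, \cref{l:boundary} gives the minimiser $(\bar\eta_1,1)$; on $\eta_1 = 1$ it gives $(1,\bar\eta_2)$. Comparing the two candidate values $\phi(\bar\eta_1,1)$ and $\phi(1,\bar\eta_2)$ and taking the smaller yields the formula claimed.

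The main subtlety will be the boundary exclusion argument: making sure that the infimum of $\phi$ on $(0,1]^2$ is attained rather than only approached as some $\eta_i \to 0$. Fortunately the structure of \Cref{eq:ESSphi} as a product of $(A_0 + A_1/\eta_1 + A_2/\eta_2)$ with $(B_0 + B_1 \eta_1 + B_2 \eta_2)$ (with $A_i, B_i \geq 0$ and $A_1, A_2 > 0$ whenever the relevant probabilities are positive) forces $\phi \to \infty$ as either coordinate tends to $0$; the degenerate case $p_{fp} = 0$ or $p_{fn} = 0$ is handled separately but is trivial because the corresponding $R_p$ or $R_n$ vanishes. Once the boundary exclusion is established, the corollary is a straightforward combination of \cref{l:global,l:boundary}.
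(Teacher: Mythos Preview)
Your proposal is correct and follows essentially the same route as the paper's own sketch in \Cref{pf:eta}: reduce to minimising $\phi$, invoke the unconstrained analysis of \cref{l:global} when the global minimiser lies in $(0,1]^2$, and otherwise restrict to the edges $\eta_1=1$ or $\eta_2=1$ via \cref{l:boundary} and compare. If anything, you are slightly more explicit than the paper about why the infimum cannot escape to the open boundary $\eta_i\to 0^+$, which the paper leaves implicit.
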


\begin{proof}
The proofs of \Cref{l:global,l:boundary,c:optimal-eta} are sketched in the supplementary material, \Cref{pf:eta}.
\end{proof}

The optimal continuation probabilities can be interpreted in terms of the ROC analysis of the quality of the low-fidelity classifier $\tilde w(\theta) = \mathbb I(\tilde y \in \tilde \Omega(\tilde \epsilon))$ as an approximation of $w(\theta) = \mathbb I(y \in \Omega(\epsilon))$, and the computational saving of the low-fidelity model over the high-fidelity model.
The false discovery rate, $\mathbb P(y \notin \Omega(\epsilon)~|~\tilde y \in \tilde \Omega(\epsilon))$, and the false omission rate, $\mathbb P(y \in \Omega(\epsilon)~|~\tilde y \notin \tilde \Omega(\epsilon))$, are conditional versions of the false positive and false negative rates in \Cref{eq:p_fp,eq:p_fn}.
The ratio $R_p$ is the false discovery rate, divided by the expected time to simulate the high-fidelity model when $\tilde y \in \tilde \Omega(\tilde \epsilon)$, expressed in units of the low-fidelity simulation cost.
Smaller values of $R_p$ occur when the false discovery rate is small, and where the low-fidelity model is much cheaper than the high-fidelity model.
A similar interpretation exists for $R_n$, defined as the false omission rate divided by the expected time taken to simulate the high-fidelity model when $\tilde y \notin \tilde \Omega(\tilde \epsilon)$, again in units of the low-fidelity simulation cost.

The smallest values of $\eta_1^\star,\eta_2^\star$ are found when $p_{fp}$ and $p_{fn}$ are as small as possible: that is, where the accuracy of $\tilde w$ as an approximation to $w$ is greatest.
As the accuracy decreases, the benefit to the efficiency of putting $\eta_1,\eta_2 < 1$ becomes progressively less, until the optimal choice is for one or both of $\eta_1,\eta_2$ to be unity. 
In such cases, $\tilde w$ is not a good enough approximation to $w$ to recommend any early acceptance and/or rejection at all.

The optimal continuation probabilities make it clear that the early accept/reject multifidelity approach relies on:
(i) the false discovery rate $\mathbb P(y \notin \Omega(\epsilon) ~|~ \tilde y \in \tilde \Omega(\tilde \epsilon))$ and the false omission rate $\mathbb P(y \in \Omega(\epsilon) ~|~ \tilde y \notin \tilde \Omega(\tilde \epsilon))$ being suitably small; and
(ii) the simulation costs $\mathbb E(c~|~\tilde y \in \tilde \Omega(\tilde \epsilon))$ and $\mathbb E(c~|~\tilde y \notin \tilde \Omega(\tilde \epsilon))$ of the high-fidelity model being suitably large in comparison to the average simulation time, $\mathbb E(\tilde c)$, of the low-fidelity model.

\section{Example: stochastic repressilator model}
\label{s:repressilator}

We now illustrate the multifidelity approach to rejection sampling by its application to a stochastic model of a synthetic genetic network known as the repressilator~\cite{Elowitz2000}. 
This synthetic genetic network consists of three genes $G_1$, $G_2$ and $G_3$, which are transcribed and translated into proteins $P_1$, $P_2$ and $P_3$, respectively. 
Transcription of $G_2$ is repressed by $P_1$, transcription of $G_3$ is repressed by $P_2$, and transcription of $G_1$ is repressed by $P_3$. 
This cycle of repression is known to cause oscillatory behaviour.

\subsection{Model}
\label{ss:repressilator}
The specific form of the model is adapted from that used in~\cite{Toni2009}. The chemical reaction description of the model is
\begin{subequations}
\begin{align}
  \xrightarrow{\alpha_0 + \alpha f(p_j)} ~&m_i \xrightarrow{1} \emptyset & &\text{for $(i,j)=(1,3)$, $(2,1)$, and $(3,2)$,} \\
  &m_i \xrightarrow{\beta} m_i + p_i & &\text{for $i = 1,2,3$},  \\
  &p_i \xrightarrow{\beta} \emptyset & &\text{for $i = 1,2,3$},
\end{align}
\label{eq:repressilator}
\end{subequations}
where the decreasing function $f(p) = K_h^n / (K_h^n + p^n) $ models the repression of mRNA transcription by protein.
The goal of parameter identification will be to identify the parameters $n$ and $K_h$. 
For the purposes of this example, the observed data $y_{\mathrm{obs}}$ will be synthetic, generated by simulating the model in \Cref{eq:repressilator} using the `real' parameter values: 
$\alpha_0 =1$, $\beta = 5$, $\alpha = 1000$, $n = 2$, and $K_h = 20$, to a final time of $T_{\mathrm{final}}=10$.
For the parameter inference task, the values of $\alpha_0$, $\alpha$, and $\beta$ are fixed at these nominal values.
The remaining parameters, $n$ and $K_h$, are uncertain with prior distributions
$n \sim U(1, 4)$ and $K_h \sim U(10, 30)$.
The initial conditions are fixed at $(m_1,m_2,m_3) = (0,0,0)$ and $(p_1,p_2,p_3) = (40,20,60)$.

\subsection{Data generation}
\label{sss:data}
We used Gillespie's stochastic simulation algorithm (SSA)~\cite{Gillespie1977} to generate the observed data $y_{\mathrm{obs}}$ for the nominal parameter values. 
Then, for each of $N = 5 \times 10^6$ sample points $(n,K_h)$ from the uniform prior, we generated:
(i) a simulation $\tilde y \sim \tilde p(\cdot~|~\theta)$ from a low-fidelity tau-leap~\cite{Gillespie2001} implementation of \Cref{eq:repressilator}; and 
(ii) a simulation $y \sim p(\cdot~|~\theta)$ from the high-fidelity SSA implementation of \Cref{eq:repressilator}.
For more details of the stochastic simulations, we refer the reader to the tutorial~\cite{Higham2008}, and to \Cref{s:coupling-sims} in the supplementary material.

For each fidelity, the summary statistics are vectors of each species' molecule count at integer time-points $t=0,1,\dots,10$, such that $y_{\mathrm{obs}} = \tilde y_{\mathrm{obs}}$ from the synthetic data.
The distances $\tilde d(\tilde y_{\mathrm{obs}},\tilde y)$ and $d(y_{\mathrm{obs}},y)$ are Euclidean distances normalised by the time horizon, $T_{\mathrm{final}}$, and the threshold values are $\tilde \epsilon = \epsilon = 50$, common to both fidelities.
\Cref{fig:distances} shows, for a subset of $N_{\mathrm{sample}}=10^4$ pairs of simulations, how the values of $d$ vary with $\tilde d$ (left) and $n$ (right).
The left panel shows that the distances from the data of the high- and low-fidelity simulations are correlated.
The quadrants in the left panel also show the correlation between $w$ and $\tilde w$.
We map the false positive and false negative simulations from the left to the right panel, where the orange points show parameter samples where $\tilde y \in \tilde \Omega(\tilde \epsilon)$ but $y \notin \Omega(\epsilon)$, while conversely the green points show parameter values where $\tilde y \notin \tilde \Omega(\tilde \epsilon)$ but $y \in \Omega(\epsilon)$.

\begin{figure}[tbhp]
 \centering
 \includegraphics[width=\textwidth]{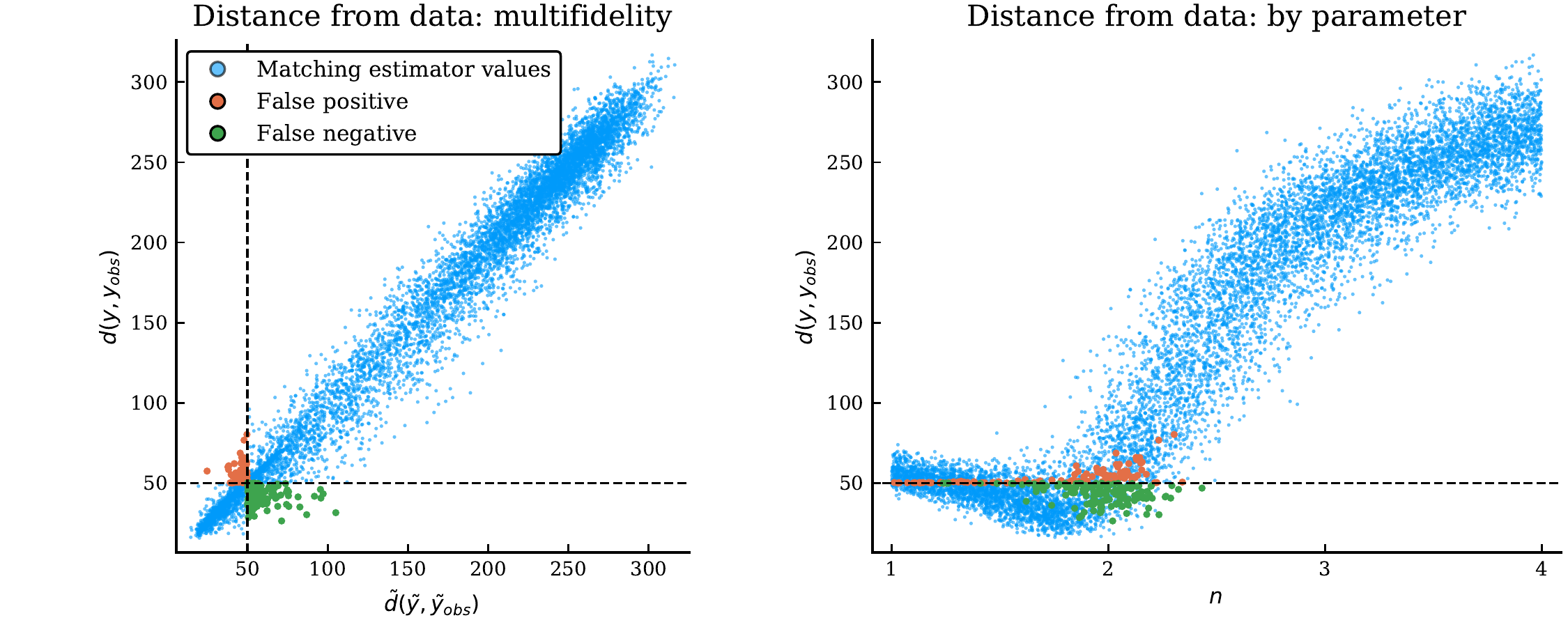}
 \caption{
 Left: Distances between observed data and low-fidelity ($x$-axis) and high-fidelity ($y$-axis) simulations, for $N=10^4$ sample points generated from the uniform prior. 
 Simulations of low- and high-fidelity models are coupled by use of a common noise input (see \Cref{sss:coupling} for details).
 Quadrants correspond to the four possible values of $(\tilde w, w) \in \{ 0, 1 \}^2$.
 Right: Distances between observed data and high-fidelity simulations, plotted against values of $n$ for the same $N_{\mathrm{sample}}=10^4$ sample points of $(n,K_h)$ generated from the uniform prior.
 }
 \label{fig:distances}
\end{figure}

\subsection{Applying early accept/reject multifidelity ABC}
\label{sss:abc}

We use the set of $N = 5 \times 10^6$ simulations as a benchmark dataset,
and assume that the values of the expectations and probabilities in \Cref{eq:efficiency} are given by the empirical expectations and probabilities observed in this dataset.
These values can then be used to calculate the optimal continuation probabilities $(\hat \eta_1, \hat \eta_2) = (0.25, 0.12)$.
In order to demonstrate the optimality of these continuation probabilities, we will compare the efficiency of \cref{a:sl} using the values of $(\eta_1,\eta_2)$ shown in \Cref{fig:ess}. 
We consider: 
\emph{early rejection}, using $\eta_1 = 1$ and $\bar \eta_2 = 0.16$;
\emph{early decision}, using $\eta_1 = \eta_2 = 0.14$;
\emph{rejection sampling}, using $\eta_1 = \eta_2 = 1$; 
and four additional non-optimised values $(\eta_1^\pm,\eta_2^\pm)$, midway between $(\hat \eta_1,\hat \eta_2)$ and each corner of $(0,1]^2$.

\begin{figure}[tbhp]
 \centering
 \includegraphics[width=\textwidth]{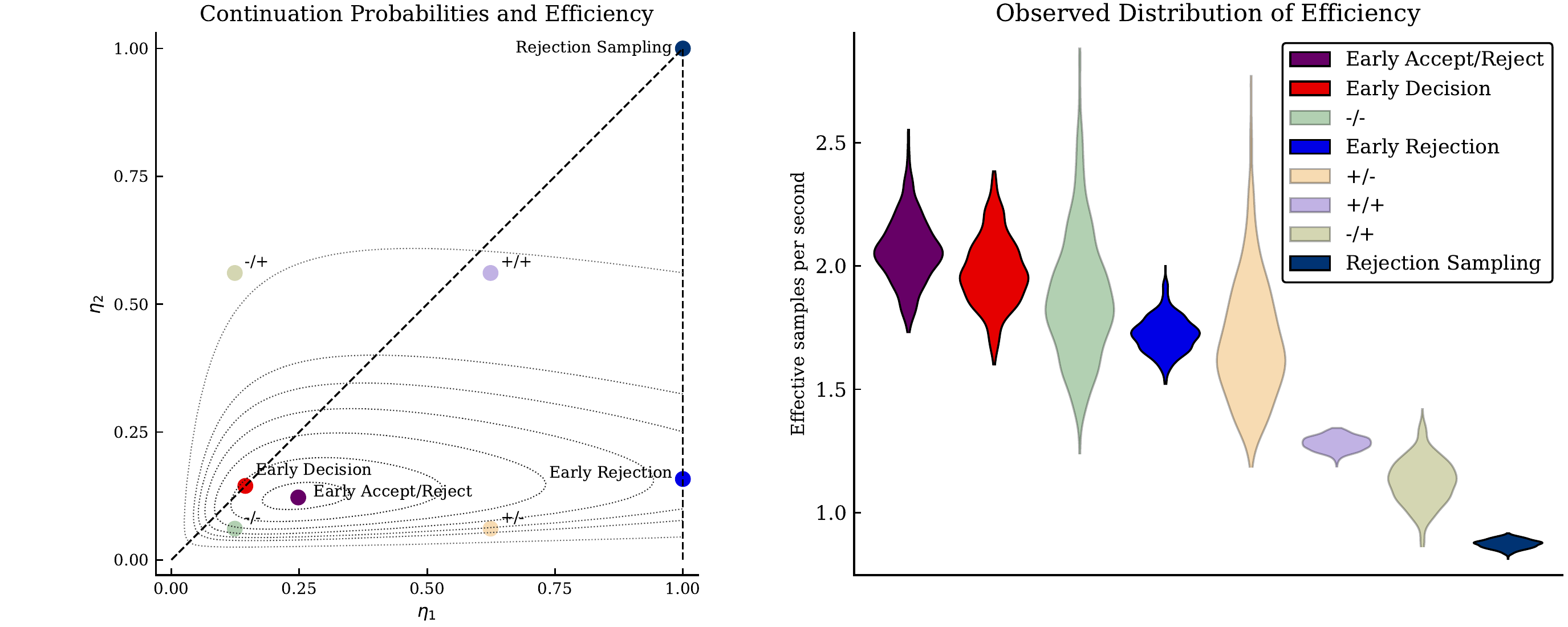}
 \caption{
 Left: values of $(\eta_1, \eta_2)$ used in comparison.
 Level sets of $\phi(\eta_1,\eta_2)$ are depicted, corresponding to $(\eta_1,\eta_2)$ giving 99\%, 95\%, 90\%, 85\%, 80\%, 75\%, and 60\% 
 of the maximum theoretical efficiency $\phi(\hat \eta_1, \hat \eta_2)$.
 The broken vertical and diagonal lines give the constrained spaces in which efficiency is maximised for \emph{early rejection} and \emph{early decision}, respectively.
 Right: observed efficiencies $\mathrm{ESS}/T_{\mathrm{tot}}$ across 500 realisations of \cref{a:sl} for each $(\eta_1,\eta_2)$, ordered by theoretical efficiency $\phi(\eta_1,\eta_2)$. 
 Lighter colours correspond to the non-optimised $(\hat \eta_1^\pm, \hat \eta_2^\pm)$ pairs; darker colours correspond to rejection sampling and the three optimised values of $(\eta_1, \eta_2)$. 
}
 \label{fig:ess}
\end{figure}

\begin{table}[htbp]
{\footnotesize
\caption{The observed probability (across 500 samples built using each $(\eta_1,\eta_2)$) that the efficiency of a realisation using $(\eta_1,\eta_2)$ given by a row exceeds that using $(\eta_1,\eta_2)$ given by a column. The values of $(\eta_1,\eta_2)$ and the distribution of efficiencies are depicted in \cref{fig:ess}.}
\label{tab:pairwise-ess}
\begin{center}
\begin{tabular}{r|ccccccc}
$\mathbb{P}\left(\text{row exceeds column}\right)$ & \text{Early decision} & \text{-/-} & \text{Early rejection} & \text{+/-} & \text{+/+} & \text{-/+} & \text{Rejection}\\ \hline \text{Early accept/reject} & 0.67 & 0.77 & 0.99 & 0.90 & 1.00 & 1.00 & 1.00\\ \text{Early decision} &  & 0.67 & 0.95 & 0.85 & 1.00 & 1.00 & 1.00\\ \text{-/-} &  &  & 0.71 & 0.70 & 1.00 & 1.00 & 1.00\\ \text{Early rejection} &  &  &  & 0.58 & 1.00 & 1.00 & 1.00\\ \text{+/-} &  &  &  &  & 0.99 & 1.00 & 1.00\\ \text{+/+} &  &  &  &  &  & 0.94 & 1.00\\ \text{-/+} &  &  &  &  &  &  & 0.99
\end{tabular}
\end{center}
}
\end{table}

To create the distributions shown in \cref{fig:ess}, we partitioned the benchmark dataset into 500 subsamples of size $N_{\mathrm{sample}}=10^4$.
For each value of $(\eta_1,\eta_2)$, we applied \cref{a:sl} to each of the 500 subsamples and recorded the value of $\mathrm{ESS}/T_{\mathrm{tot}}$\footnote{
Using $(\eta_1,\eta_2) = (1,1)$ in \cref{a:sl} to give the rejection sampling baseline efficiency is slightly unfair: 
\cref{a:abc} is faster, because no low-fidelity simulations are generated. 
However, in \Cref{sss:coupling} we will justify using $(\eta_1,\eta_2) = (1,1)$ in \cref{a:sl} as the rejection sampling baseline.
}.
\Cref{fig:ess} supports the optimality of $(\hat \eta_1,\hat \eta_2) = (0.25, 0.12)$ for maximising the efficiency of \cref{a:sl}.
\Cref{tab:pairwise-ess} quantifies the pairwise comparisons between all eight continuity probability pairs, in terms of how many observed realisations have higher efficiency.
While early decision and early rejection continuation probabilities do improve performance over rejection ABC, the theoretically optimal continuation probabilities $(\hat \eta_1,\hat \eta_2)$ give the highest efficiency across 500 realisations.
For example, we observe that $99\%$ of realisations built using $(\hat \eta_1,\hat \eta_2)$ (early accept/reject) were more efficient than using early rejection alone.
If we enable early acceptance but do not treat the continuation probabilities separately (i.e. early decision), then we still observe that $95\%$ of such realisations are built more efficiently than using early rejection.
However, $67\%$ of samples built treating the early accept/reject continuation probabilities separately are built more efficiently than using a single continuation probability for both.

In summary, we have shown that allowing early acceptance improves performance over early rejection alone.
Furthemore, this benefit is increased by treating early acceptance and early rejection separately, by optimising the continuation probabilities such that $\eta_1 \neq \eta_2$.

\section{Implementation and performance optimisation}
\label{s:implementation}

In this section we discuss the practical issues involved in defining and optimising the performance of \cref{a:sl}, and illustrate them in the context of the example introduced in \Cref{s:repressilator}.

\subsection{Variance reduction by coupling}
\label{sss:coupling}

In \cref{a:sl}, for each $\theta_i \sim \pi(\cdot)$, we first simulate $\tilde y \sim \tilde p(\cdot~|~\theta)$ from the low-fidelity model. 
If $U < \eta(\tilde y)$, we then simulate $y \sim p(\cdot~|~\theta)$ from the high-fidelity model.
In the simplest case, none of the information from the low-fidelity simulation is used to simulate the high-fidelity model: however, this is not optimal.
Consider the specific multifidelity approach of early stopping: in this case, the low-fidelity model replicates the high-fidelity model but only over $[0,t]$, for the stopping time $t<T_{\mathrm{final}}$.
To generate $y$, rather than simulate the model afresh over $[0, T_{\mathrm{final}}]$, we can instead restart the simulation used to generate $\tilde y$ from its state at $t$ and generate the trajectory over $(t,T_{\mathrm{final}}]$.
The high-fidelity model is thus simulated conditional on the low-fidelity simulation.

We can apply this concept to the more general multifidelity setting by simulating the high-fidelity model conditional on the low-fidelity simulation.
Consider a model, $p(\cdot~|~\tilde y, \theta)$, which we will term a \emph{coupling} between the high-fidelity and low-fidelity models, defined such that
\begin{equation}
 \int p(\cdot~|~\tilde y, \theta) \tilde p(\tilde y~|~\theta) ~\mathrm d\tilde y = p(\cdot~|~\theta).
 \label{eq:coupling}
\end{equation}
Given a low-fidelity simulation, $\tilde y \sim \tilde p(\cdot~|~\theta)$, consider a simulation, $y \sim p(\cdot~|~\tilde y, \theta)$, from the coupling, which we will term a \emph{coupled simulation}. 
Then the preceding theory still holds, since \Cref{eq:coupling} implies that the coupled simulation is a simulation of the high-fidelity model, after marginalising out $\tilde y$.

One consequence of the coupled simulation being conditional on the low-fidelity simulation is that $y$ and $\tilde y$, and thus the estimators $w$ and $\tilde w$, will (by a judicious choice of coupling) be correlated.
In the context of \cref{a:sl} this, in turn, means that the false discovery and false omission rates $\mathbb P( y \notin \Omega(\epsilon)~|~ \tilde y \in \tilde \Omega(\tilde \epsilon))$ and $\mathbb P( y \in \Omega(\epsilon)~|~ \tilde y \notin \tilde \Omega(\tilde \epsilon))$, respectively, can be reduced.
This subsequently reduces the variance of $w_{\mathrm{mf}}(\theta)$ as an estimator of the ABC approximation to the likelihood.
A second consequence is that the time $c$ taken to simulate $y \sim p(\cdot~|~\tilde y,\theta)$ from the coupling may be smaller than when simulating $y \sim p(\cdot~|~\theta)$ from the uncoupled high-fidelity model.
If the reuse of information from $\tilde y$ means that the high-fidelity simulation time is smaller, then the optimal rate of early acceptance/rejection is lower\footnote{
Using $(\eta_1,\eta_2) = (1,1)$ in \Cref{a:sl} as the baseline rejection sampler performance can be justified when the time $c$ to simulate $y \sim p(\cdot~|~\theta)$ from the high-fidelity model is equal to the time taken to simulate both $\tilde y$ from the low-fidelity model and the coupled simulation $y \sim p(\cdot~|~\tilde y,\theta)$.
}.

The key problem in this approach is how to define the coupling, $p(\cdot~|~\tilde y,\theta)$.
The appropriate choice of coupling is usually specific to the details of the low- and high-fidelity models~\cite{Giles2015,Lester2015,Peherstorfer2018,Warne2018}. 
The results presented in \Cref{s:repressilator,ss:viral} are based on a coupling between low- and high-fidelity models using a common noise input, as described in \Cref{s:coupling-sims}.


\subsection{Parameter estimation}
\label{ss:Fdep}

Recall that the output of \cref{a:sl} is a set of weights and parameter pairs $\{ w_i, \theta_i \}$ that are used in the estimator
\[
 \mathbb E_{\mathrm{ABC}}(F(\theta)) \approx \frac{1}{N}\sum_i w_i F(\theta_i) \bigg/ \frac{1}{N} \sum_j w_j = \mu_{\mathrm{ABC}}(F).
\]
\Cref{ss:comparing} considered the value of the continuation probabilities to optimise the efficiency $\mathrm{ESS}/T_{\mathrm{tot}}$.
However, the ESS is independent of the function $F$ being estimated by the sample.
We can instead measure the performance of \cref{a:sl} by trading off the variance of the Monte Carlo estimate $\mu_{\mathrm{ABC}}(F)$ against simulation time, a performance metric that is closer to that typically used in multilevel estimation algorithms~\cite{Giles2008}.

\begin{lemma} 
\label{l:varmu}
The variance of $\mu_{\mathrm{ABC}}$ can be expressed in terms of the weights $w_{\mathrm{mf}}(\theta_i)$ as approximately equal to
\[
    \mathbb V \left( \mu_{\mathrm{ABC}} \left( F \right) \right) \approx \frac{1}{NZ^2} \mathbb V \left(w_{\mathrm{mf}} \left( F - \bar F \right) \right) = \frac{1}{N} \frac{\mathbb E \left(w_{\mathrm{mf}}^2 \left( F-\bar F \right)^2 \right)}{\mathbb E\left(w_{\mathrm{mf}}\right)^2},
\]
where $\bar F = \mathbb E_{\mathrm{ABC}}(F~|~y_{\mathrm{obs}})$ is the ABC posterior expectation of $F$ estimated by $\mu_{\mathrm{ABC}}(F)$.
\end{lemma}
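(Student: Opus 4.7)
The plan is to recognise $\mu_{\mathrm{ABC}}(F)$ as a ratio of two sample means and apply the standard delta method (first-order Taylor expansion) for ratio estimators. Write $A_N = \frac{1}{N}\sum_i w_{\mathrm{mf}}(\theta_i) F(\theta_i)$ and $B_N = \frac{1}{N}\sum_i w_{\mathrm{mf}}(\theta_i)$, so that $\mu_{\mathrm{ABC}}(F) = A_N / B_N$. From the unbiasedness of $w_{\mathrm{mf}}$ as an estimator of the ABC likelihood approximation, I have $\mathbb E(A_N) = Z \bar F$ and $\mathbb E(B_N) = Z$, where $\bar F = \mathbb E_{\mathrm{ABC}}(F~|~y_{\mathrm{obs}})$.

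Next, I would linearise $A_N / B_N$ around $(\mathbb E(A_N), \mathbb E(B_N)) = (Z\bar F, Z)$. The first-order expansion gives
\begin{equation*}
\mu_{\mathrm{ABC}}(F) - \bar F \approx \frac{1}{Z}\bigl( A_N - \bar F B_N \bigr),
\end{equation*}
from which
\begin{equation*}
\mathbb V(\mu_{\mathrm{ABC}}(F)) \approx \frac{1}{Z^2}\, \mathbb V\bigl( A_N - \bar F B_N \bigr).
\end{equation*}
Since $A_N - \bar F B_N = \frac{1}{N}\sum_{i=1}^N w_{\mathrm{mf}}(\theta_i)\bigl(F(\theta_i) - \bar F\bigr)$ is a sample mean over i.i.d.\ draws, its variance is exactly $\frac{1}{N}\mathbb V\bigl(w_{\mathrm{mf}}(F - \bar F)\bigr)$. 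Combining these yields the first claimed expression.

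For the second equality, I would compute $\mathbb E\bigl(w_{\mathrm{mf}}(F - \bar F)\bigr) = \mathbb E(w_{\mathrm{mf}} F) - \bar F\, \mathbb E(w_{\mathrm{mf}}) = Z\bar F - \bar F Z = 0$, using unbiasedness of $w_{\mathrm{mf}}$ both as an estimator of the ABC likelihood at each $\theta$ (so $\mathbb E(w_{\mathrm{mf}} F) = Z\bar F$) and of $Z$ itself. Hence $\mathbb V\bigl(w_{\mathrm{mf}}(F-\bar F)\bigr) = \mathbb E\bigl(w_{\mathrm{mf}}^2(F-\bar F)^2\bigr)$, and replacing $Z^2$ in the denominator by $\mathbb E(w_{\mathrm{mf}})^2$ gives the stated form.

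The only genuinely delicate point is justifying the delta-method approximation: it is asymptotic in $N$ and relies on $B_N$ concentrating around $Z > 0$ so that the Taylor remainder is negligible. For the purposes of this lemma the ``$\approx$'' signals precisely this first-order approximation, matching the analogous treatment of ESS in \Cref{ss:essefficiency}, so I would simply flag that the neglected terms are $o(1/N)$ rather than pursue a full rigorous bound.
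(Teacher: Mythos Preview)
Your proposal is correct and follows essentially the same route as the paper: the paper also applies the delta method to the ratio $\sum w_i F(\theta_i)/\sum w_j$, identifies $\mu_{WF}=Z\bar F$ and $\mu_W=Z$, recombines the resulting variance--covariance terms into $\mathbb V\bigl(W(F-\bar F)\bigr)/(N\mathbb E(W)^2)$, and then uses $\mathbb E(WF)=\bar F\,\mathbb E(W)$ to pass to the second-moment form. The only cosmetic difference is that the paper writes out the three-term delta-method variance expansion explicitly before collapsing it, whereas you linearise $A_N/B_N$ first and take the variance of $A_N-\bar F B_N$ directly.
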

\begin{proof}
This expression is derived in the supplementary material, \Cref{s:ESS}.
\end{proof}

\begin{corollary}
 In the limit as $N \rightarrow \infty$, the product $\mathbb V\left( \mu_{\mathrm{ABC}}\left( F \right) \right) T_{\mathrm{tot}}$ of the estimator  variance and the total simulation time can be approximated by
 \begin{equation}
 \label{eq:VarTapprox}
  \mathbb V \left( \mu_{\mathrm{ABC}}\left( F \right) \right) T_{\mathrm{tot}} \approx \frac{\mathbb E\left(w_{\mathrm{mf}}^2 \left( F-\bar F \right)^2 \right) \mathbb E\left( T \right)}{\mathbb E \left(w_{\mathrm{mf}} \right)^2} = \frac{\phi \left(\eta_1,\eta_2; F \right)}{\mathbb E \left(w_{\mathrm{mf}} \right)^2} ,
 \end{equation}
 for the random time $T$ taken to generate $w_{\mathrm{mf}}(\theta)$.
\end{corollary}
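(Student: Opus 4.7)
The plan is to combine the asymptotic expression for $\mathbb V(\mu_{\mathrm{ABC}}(F))$ given by \Cref{l:varmu} with a law of large numbers argument for $T_{\mathrm{tot}} = \sum_{i=1}^N T_i$. First, I would note that since the $\theta_i$ are drawn independently from $\pi(\cdot)$, the random times $T_i$ needed to generate each weight $w_{\mathrm{mf}}(\theta_i)$ via \cref{a:sl} are i.i.d. copies of $T$, with finite mean $\mathbb E(T)$ given explicitly in \Cref{ss:essefficiency}. Hence by the strong law of large numbers, $T_{\mathrm{tot}}/N \to \mathbb E(T)$ almost surely as $N \to \infty$.

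Next, I would multiply the approximation from \Cref{l:varmu},
\[
\mathbb V(\mu_{\mathrm{ABC}}(F)) \approx \frac{1}{N}\frac{\mathbb E(w_{\mathrm{mf}}^2 (F-\bar F)^2)}{\mathbb E(w_{\mathrm{mf}})^2},
\]
by $T_{\mathrm{tot}}$, and rewrite the right hand side as
\[
\mathbb V(\mu_{\mathrm{ABC}}(F)) \, T_{\mathrm{tot}} \approx \frac{T_{\mathrm{tot}}}{N} \cdot \frac{\mathbb E(w_{\mathrm{mf}}^2 (F-\bar F)^2)}{\mathbb E(w_{\mathrm{mf}})^2}.
\]
Taking the limit as $N \to \infty$, the factor $T_{\mathrm{tot}}/N$ converges to $\mathbb E(T)$, so the product converges to the claimed expression $\mathbb E(w_{\mathrm{mf}}^2 (F-\bar F)^2)\mathbb E(T)/\mathbb E(w_{\mathrm{mf}})^2$. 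The second equality, identifying this with $\phi(\eta_1,\eta_2;F)/\mathbb E(w_{\mathrm{mf}})^2$, is then simply a definition: $\phi(\eta_1,\eta_2;F) = \mathbb E(w_{\mathrm{mf}}^2 (F-\bar F)^2)\mathbb E(T)$ is an $F$-dependent generalisation of the objective $\phi(\eta_1,\eta_2) = \mathbb E(w_{\mathrm{mf}}^2)\mathbb E(T)$ introduced in \Cref{ss:comparing} (reducing to it when $F \equiv 1$, up to the constant $\bar F$ shift absorbed into a squared deviation).

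The main subtlety, rather than obstacle, is that $T_i$ and $w_{\mathrm{mf}}(\theta_i)$ are not independent: the simulation time depends on whether the low-fidelity test triggers continuation to the high-fidelity simulation. However, this dependence is irrelevant to the argument above, because the law of large numbers is applied to $T_{\mathrm{tot}}/N$ alone, and the approximation of \Cref{l:varmu} already accounts for the joint distribution of $(w_{\mathrm{mf}}, F)$ through the moments $\mathbb E(w_{\mathrm{mf}}^2(F-\bar F)^2)$ and $\mathbb E(w_{\mathrm{mf}})$; the deterministic product simply separates into its two asymptotic pieces. Thus the proof reduces to invoking \Cref{l:varmu} together with the law of large numbers, and the corollary follows.
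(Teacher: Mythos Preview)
Your proposal is correct and matches the paper's (implicit) approach: the Corollary is stated immediately after \Cref{l:varmu} without a separate proof, and is obtained exactly as you describe, by multiplying the variance approximation of \Cref{l:varmu} by $T_{\mathrm{tot}} = N\cdot(T_{\mathrm{tot}}/N)$ and invoking the same $N\to\infty$ limit $T_{\mathrm{tot}}/N \to \mathbb E(T)$ already used in \Cref{eq:efficiency_def}. One small caveat: your parenthetical claim that $\phi(\eta_1,\eta_2;F)$ reduces to $\phi(\eta_1,\eta_2)$ when $F\equiv 1$ is not quite right, since then $\bar F=1$ and $(F-\bar F)^2=0$; the $F$-independent $\phi$ corresponds instead to replacing $(F-\bar F)^2$ by the constant $1$, not to any particular choice of $F$.
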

Note that the reciprocal of this approximation has a similar form to the limiting value of $\mathrm{ESS}/T_{\mathrm{tot}}$, and can therefore be thought of as an estimator-specific efficiency.
As $\mathbb E(w_{\mathrm{mf}})^2$ is independent of $(\eta_1,\eta_2)$, the optimal tradeoff is where $\phi(\eta_1,\eta_2; F)$ is minimised.

The expected computation time, $\mathbb E(T)$, is given in \Cref{ss:comparing}.
However, the factor corresponding to the second moment is now $F$-dependent, such that
\begin{align*}
 \mathbb E \left(w_{\mathrm{mf}}^2 \left( F - \bar F \right)^2 \right)
 &= 
 \int \left( F(\theta) - \bar F \right)^2 \mathbb E \left(w_{\mathrm{mf}}^2 ~|~ \theta \right) \pi(\theta)~ \mathrm d\theta \\
 &= 
 \big(p_{tp}(F) - p_{fp}(F)\big) + \frac{1}{\eta_1} p_{fp}(F) + \frac{1}{\eta_2} p_{fn}(F),
\end{align*}
where
\begin{subequations}
\label{eq:Fdep}
 \begin{align}
 p_{tp}(F) &= \int \mathbb P \left( \left\{ \tilde y \in \tilde \Omega(\tilde \epsilon) \right\} \cap \left\{ y \in \Omega(\epsilon) \right\} ~|~ \theta \right)  (F(\theta) - \bar F)^2 \pi(\theta) ~\mathrm d\theta, \\
 p_{fp}(F) &= \int \mathbb P \left( \left\{ \tilde y \in \tilde \Omega(\tilde \epsilon) \right\} \cap \left\{ y \notin \Omega(\epsilon) \right\} ~|~ \theta \right)  (F(\theta) - \bar F)^2 \pi(\theta) ~\mathrm d\theta, \\
 p_{fn}(F) &= \int \mathbb P \left( \left\{ \tilde y \notin \tilde \Omega(\tilde \epsilon) \right\} \cap \left\{ y \in \Omega(\epsilon) \right\} ~|~ \theta \right)  (F(\theta) - \bar F)^2 \pi(\theta) ~\mathrm d\theta .
\end{align}
\end{subequations}
In these coefficients, values of $\theta$ generating false positives and false negatives are now penalised based on how much they contribute to the variance.
The optimal continuation probabilities $(\hat \eta_1,\hat \eta_2)$ specific to a given estimator $\mu_{\mathrm{ABC}}(F) \approx \mathbb E_{\mathrm{ABC}}(F(\theta)) = \bar F$ can now be found by replacing $p_{tp}$, $p_{fp}$, $p_{fn}$ in \Cref{l:global,l:boundary,c:optimal-eta} with the respective $F$-dependent parameters in \Cref{eq:Fdep}. 


To illustrate the impact of this alternative performance metric on the continuation probabilities, we return to the repressilator example of \Cref{s:repressilator}. 
We consider three functions of the uncertain parameter $n$ to estimate: $F_1(n) = \mathbb I(n \in (1.9,2.1))$; $F_2(n) = \mathbb I(n \in (1.2,1.4))$; and $F_3(n) = n$.
The optimal pairs $(\hat \eta_1,\hat \eta_2)_i$ for each function are $(\hat \eta_1,\hat \eta_2)_1 = (0.44,0.28)$, $(\hat \eta_1,\hat \eta_2)_2 = (0.23,0.06)$ and $(\hat \eta_1,\hat \eta_2)_3 = (0.38,0.20)$.
These clearly deviate, to different degrees, from the optimal continuation probabilities for maximising $\mathrm{ESS}/T_{\mathrm{tot}}$ of $(\hat \eta_1,\hat \eta_2) =  (0.25, 0.12)$.
Importantly, although the computational time saved by early rejection or acceptance does not change, the contribution of false positives and false negatives to increasing the variance is different enough to change the optimal continuation probabilities. 

To demonstrate the efficiency of each pair of continuation probabilities, we run \Cref{a:sl} on 5000 subsamples of the benchmark data, stopping when the total simulation cost of the subsample reaches 30 seconds.
For each subsample we estimate $\mu(F_i)$: \Cref{tab:varFi} then records the variance across this sample of 5000 estimates.
By fixing the computational cost for every subsample, the variances in the estimator $\mu_{\mathrm{ABC}}(F_i)$ across 5000 subsamples for each value of $(\eta_1,\eta_2)$ can be directly compared.
Using any $(\eta_1,\eta_2) \neq (1,1)$ seems to outperform rejection sampling, but the early accept/reject continuation probabilities are the best-performing. 
By also showing the values of $\phi(\eta_1,\eta_2;F_i)$, we can see that the observed sample variances remain approximately proportional to this objective function.
The percentage values in \Cref{tab:varFi} show that the expected benefits of optimising $\phi(\eta_1,\eta_2; F)$ over $(\eta_1,\eta_2) \in (0,1]^2$ may be marginal, depending on the function, $F$, being estimated.
Furthermore, using $(\eta_1, \eta_2) = (0.25,0.12)$ chosen to optimise ESS clearly does not produce the lowest variances across all functions $F_i$.


\begin{table}[tbhp]
{\footnotesize
 \caption{Continuation probabilities, objective function values (multiplied by $10^3$), and the sample variance (multiplied by $10^3$) of 5000 estimates of $\mu_{\mathrm{ABC}}(F_i)$, $i=1,2,3$, each built with \Cref{a:sl} for a fixed simulation budget of 30 seconds.
 \emph{Optimal ESS} refers to $(\eta_1,\eta_2)$ chosen to minimise ESS, independently of $F$.
 Percentages are reductions relative to rejection sampling (first column).
 }
 \label{tab:varFi}
 \begin{center}
 \begin{tabular}{c|ccccc}
  $\times 10^{-3}$ & Rejection & Early rejection & Early decision & Early accept/reject & Optimal ESS \\
 \hline \hline
 $F_1 = \mathbb I(n \in (1.9,2.1))$ &  \\
 \hline
 $(\eta_1,\eta_2)$ & $(1,1)$ & $(1,0.35)$ & $(0.30,0.30)$ & $(0.44,0.28)$ & $(0.25, 0.12)$ \\
 $\phi(\eta_1,\eta_2;~F_1)$ &  6.14 &  4.68 (24\%) & 4.50 (27\%) & 4.41 (28\%) & 5.09 (17\%) \\
 Sample variance &  4.17  & 3.16 (24\%) & 3.04 (27\%) & 2.85 (32\%) & 3.26 (22\%) \\
 \hline \hline
 $F_2 = \mathbb I(n \in (1.2,2.4))$ &\\
 \hline
 $(\eta_1,\eta_2)$ &  $(1,1)$ & $(1,0.08)$ & $(0.10,0.10)$ & $(0.23,0.06)$ & $(0.25,0.12)$  \\
 $\phi(\eta_1,\eta_2;~F_2)$  &  8.14 & 3.16 (61\%) & 2.85 (65\%) & 2.53 (69\%) & 2.70 (67\%) \\
 Sample variance &  5.63 & 2.03 (64\%) & 1.97 (65\%) &  1.69 (70\%) & 1.82 (68\%) \\
 \hline \hline
 $F_3 = n$ &  \\
 \hline
 $(\eta_1,\eta_2)$ & $(1,1)$ & $(1,0.25)$ & $(0.23,0.23)$ & $(0.38,0.20)$ & $(0.25,0.12)$ \\
 $\phi(\eta_1,\eta_2;~F_3)$ &  4.08 & 2.64 (35\%) & 2.51 (38\%) & 2.41 (41\%) & 2.55 (38\%) \\
 Sample variance &  2.82 & 1.72 (39\%) & 1.68 (40\%) & 1.59 (44\%) & 1.66 (41\%)
 \end{tabular}
 \end{center}
 } 
\end{table}

\subsection{Estimating optimal continuation probabilities}
\label{ss:etaalgo}

The values of $(\hat \eta_1,\hat \eta_2)$ depend on the values given in \Cref{eq:efficiency,eq:Fdep}, which are based on the times taken to generate $\tilde y$ and $y$, together with the ROC values of $\tilde w$ as an approximation of $w$.
Thus, in the absence of any initial information about computation times and ROC values, the optimal continuation probabilities cannot be known in advance. 
Before applying \Cref{a:sl} we therefore need a burn-in period to enable reasonable estimates of the values in \Cref{eq:efficiency,eq:Fdep}. 

Suppose that, at iteration $m$ of \cref{a:sl}, both $\tilde y_i$ and $y_i$ have been generated for $k$ of the $m$ sampled parameter values, $\theta$.
The other $m-k$ values of $\theta$ have only generated $\tilde y$, and have been accepted or rejected early.
We denote the index sets $I_m = \{ 1,\dots,m \}$ and $I_k = \{ i \in I_m \text{ s.t. both }\tilde y_i,y_i \text{ generated} \}$, and write $\rho_m = \sum_{I_m} \mathbb I(\tilde y_i \in \tilde \Omega(\tilde \epsilon))/m$ and $\rho_k = \sum_{I_k} \mathbb I(\tilde y_i \in \tilde \Omega(\tilde \epsilon))/k$.
Natural estimates of the computation times are
\begingroup
\allowdisplaybreaks
\begin{subequations}
\label{eq:time-estimates}
 \begin{align}
  \mathbb E(\tilde c) &\approx \frac{1}{m} \sum_{i \in I_m} \tilde c(\theta_i), \\
  \bar c_p &= \frac{\rho_m}{\rho_k} \cdot \frac{1}{k} \sum_{i \in I_k} c(\theta_i) \mathbb I(\tilde y_i \in \tilde \Omega(\tilde \epsilon)) , \\
  \bar c_n &= \frac{1-\rho_m}{1-\rho_k} \cdot \frac{1}{k} \sum_{i \in I_k} c(\theta_i) \mathbb I(\tilde y_i \notin \tilde \Omega(\tilde \epsilon)).
 \end{align}
\end{subequations}
The remainder of the values in \Cref{eq:efficiency} are similarly estimated by
\begin{subequations}
\label{eq:ROC-estimates}
 \begin{align}
  \bar p_{tp} &= \frac{\rho_m}{\rho_k} \cdot \frac{1}{k} \sum_{i \in I_k} \mathbb I(y_i \in \Omega(\epsilon)) \mathbb I(\tilde y_i \in \tilde \Omega(\tilde \epsilon)), \\
  \bar p_{fp} &= \frac{\rho_m}{\rho_k} \cdot \frac{1}{k} \sum_{i \in I_k} \mathbb I(y_i \notin \Omega(\epsilon)) \mathbb I(\tilde y_i \in \tilde \Omega(\tilde \epsilon)), \\
  \bar p_{fn} &= \frac{1-\rho_m}{1-\rho_k} \cdot \frac{1}{k} \sum_{i \in I_k} \mathbb I(y_i \in \Omega(\epsilon)) \mathbb I(\tilde y_i \notin \tilde \Omega(\tilde \epsilon)) ,
 \end{align}
\end{subequations}
while the $F$-dependent integrals in \Cref{eq:Fdep} are estimated through
\begin{subequations}
\label{eq:Festimates}
 \begin{align}
  \bar p_{tp}(F) &= \frac{\rho_m}{\rho_k} \cdot \frac{1}{k} \sum_{i \in I_k} (F(\theta_i) - \bar \mu)^2 \mathbb I(y_i \in \Omega(\epsilon)) \mathbb I(\tilde y_i \in \tilde \Omega(\tilde \epsilon)) , \\
  \bar p_{fp}(F) &= \frac{\rho_m}{\rho_k} \cdot \frac{1}{k} \sum_{i \in I_k} (F(\theta_i) - \bar \mu)^2 \mathbb I(y_i \notin \Omega(\epsilon)) \mathbb I(\tilde y_i \in \tilde \Omega(\tilde \epsilon)) , \\
  \bar p_{fn}(F) &= \frac{1-\rho_m}{1-\rho_k} \cdot \frac{1}{k} \sum_{i \in I_k} (F(\theta_i) - \bar \mu)^2 \mathbb I(y_i \in \Omega(\epsilon)) \mathbb I(\tilde y_i \notin \Omega(\epsilon)),
 \end{align}
\end{subequations}
\endgroup
where $\bar \mu = \sum_{i \in I_m} F(\theta_i) w_i / \sum_{j \in I_m} w_j$.
In practical implementations of \cref{a:sl}, we propose first beginning with a burn-in run by using $(\eta_1,\eta_2) = (1,1)$ for a suitably large number $M < N$ of sample points $\theta_i$. 
We can then estimate optimal continuation probabilities $(\hat \eta_1,\hat \eta_2)$ using the estimates given in \Cref{eq:time-estimates,eq:ROC-estimates,eq:Festimates} to use for subsequent iterations, $i=M+1,\dots,N$.
Note that the values in \Cref{eq:time-estimates,eq:ROC-estimates,eq:Festimates} will continue to evolve over $i>M$.
We can therefore adapt the continuation probabilities $(\eta_1,\eta_2)$ used for subsequent iterations. 
\Cref{a:sl-adapt} combines a burn-in period of length $M$ with an adaptation of continuation probabilities $(\eta_1,\eta_2)$ towards an evolving estimate of the optimum, subject to lower bounds $\eta_{1,0}$ and $\eta_{2,0}$.

\begin{algorithm}
\caption{Adaptive early accept/reject multifidelity ABC}
\label{a:sl-adapt}
\begin{algorithmic}
\STATE{Input: observations data $y_{\mathrm{obs}}$ and $\tilde y_{\mathrm{obs}}$ from a common experiment; prior $\pi(\cdot)$; function $F(\theta)$; low- and high-fidelity models $\tilde p(\cdot~|~\theta)$ and $p(\cdot~|~\theta)$; distance functions $\tilde d(\cdot,\tilde y_{\mathrm{obs}})$ and $d(\cdot,y_{\mathrm{obs}})$; thresholds $\tilde \epsilon$ and $\epsilon$; lower bounds $\eta_{1,0}$ and $\eta_{2,0}$; Monte Carlo sample size $N$; burn-in length $M<N$.}
\STATE{~}
\STATE{Initialise $(\eta_1,\eta_2) = (1,1)$ and set $I_m = I_k = \emptyset$.}
\FOR{$i = 1,\dots,N$}
    \STATE{Generate $\theta_i \sim \pi(\cdot)$ and $U \sim \mathrm{Unif}(0,1)$.}
    \STATE{Generate $\tilde y_i \sim \tilde p(\cdot~|~\theta_i)$ from the low-fidelity model.}
    \STATE{Calculate $\tilde w = \mathbb I(\tilde d(\tilde y_i, \tilde y_{\mathrm{obs}})<\tilde \epsilon)$.}
    \STATE{Set $w_i = \tilde w$.}
    \STATE{Set $\eta = \eta_1 \tilde w + \eta_2 (1-\tilde w)$.}
    \IF{$U<\eta$}
        \STATE{Generate $y_i \sim p(\cdot~|~\theta_i)$ from the high-fidelity model.}
        \STATE{Calculate $w = \mathbb I(d(y_i,y_{\mathrm{obs}})<\epsilon)$.}
        \STATE{Update $w_i = w_i + (w - w_i)/\eta$.}
        \STATE{Update $I_k = I_k \cup \{i\}$.}
    \ENDIF
    \STATE{Update $I_n = I_n \cup \{i\}$.}
    \STATE{Set $n = |I_n|$ and $k = |I_k|$.}
    \IF{$k \geq M$}
        \STATE{Recalculate values in \Cref{eq:time-estimates,eq:ROC-estimates,eq:Festimates}.}
        \STATE{Estimate optimal $\hat \eta_1$ and $\hat \eta_2$ according to \Cref{l:global,l:boundary,c:optimal-eta}.}
        \STATE{Update $\eta_1 = \hat \eta_1$ and $\eta_2 = \hat \eta_2$.}
        \STATE{Update $\eta_1 = \min(\eta_1, \eta_{1,0})$ and $\eta_2 = \min(\eta_2, \eta_{2,0})$.}
    \ENDIF
\ENDFOR
\STATE{Calculate $\mu_{\mathrm{ABC}} = \sum_{i=1}^N w_i F(\theta_i) / \sum_{i=1}^N w_i$.}
\RETURN $\mu_{\mathrm{ABC}}$
\end{algorithmic}
\end{algorithm}

This algorithm is no longer `embarassingly' parallel, although many copies of the for loop could run independently to produce a sample, potentially exchanging information on an optimal $(\eta_1, \eta_2)$.
It also requires \emph{a priori} fixed $\epsilon$ and $\tilde \epsilon$ for the optimal continuation probabilities to be well-defined, and therefore cannot target a particular acceptance rate.
More importantly, there are no longer guarantees of the consistency of the resulting estimate, although the example in the following section shows good performance.
To guarantee consistency, the adaptive phase may be followed by running \Cref{a:sl} with fixed continuation probabilities equal to $(\eta_1,\eta_2)$ found by the end of the adaptive phase.

\section{Example: viral kinetics}
\label{ss:viral}

We conclude with a further example using a model of intracellular viral kinetics~\cite{Haseltine2002,Srivastava2002}.

\subsection{Model}
A cell is initially infected with a single viral template. 
Templates hijack cellular processes to produce new viral genomes and structural protein, which combine to produce new viral vectors that are expelled from the cell. 
Alternatively, viral genomes can become new templates, and templates and structural protein can also decay. 
We denote the counts of each molecule at time $t$ by the vector 
\[
(\mathrm{template}(t), \mathrm{genome}(t), \mathrm{struct}(t), \mathrm{virus}(t)) = (x_1(t),x_2(t),x_3(t),x_4(t)).
\]
The six reactions are
\begingroup
\allowdisplaybreaks
\begin{subequations}
\begin{align}
 \mathrm{template} &\xrightarrow{k_1} \mathrm{template} + \mathrm{genome}, \\
 \mathrm{genome} &\xrightarrow{k_2} \mathrm{template}, \\
 \mathrm{template} &\xrightarrow{k_3} \mathrm{template} + \mathrm{struct}, \label{eq:fast1} \\
 \mathrm{template} &\xrightarrow{k_4} \emptyset,\\
 \mathrm{struct} &\xrightarrow{k_5} \emptyset, \label{eq:fast2} \\
 \mathrm{genome} + \mathrm{struct} &\xrightarrow{k_6} \mathrm{virus}.
\end{align}
\label{eq:viral-kinetics}
\end{subequations}
\endgroup
We use initial conditions $x(0) = (1,0,0,0)$ and time horizon $[0, T_{\mathrm{final}}] = [0, 200]$.
One important characteristic of this system is that cells can randomly recover from small-scale infection, whenever the template decays before enough genome is produced to set off the positive feedback loop leading to viral infection. 
Even if not recovered, cells can stay latently infected for a randomly long period of time before $\mathrm{virus}(t)>0$.

\subsection{Data generation}
\label{sss:viral-data}

The goal of parameter identification will be to identify the reaction rates $k_i$, $i=1,\dots,6$ in \Cref{eq:viral-kinetics}. 
We first generate synthetic data $y_{\mathrm{obs}}$, using the exact Gillespie SSA~\cite{Gillespie1977} with nominal parameters $(1,0.025,100,0.25,1.9985,7.5\times 10^{-5})$.
Ten independent simulations are produced, each corresponding to a cell in a population of size ten with a common, randomly selected, parameter set. 
The prior distribution on each uncertain parameter $k_i$ is log-uniform around its nominal value; that is, we multiply the nominal value of $k_i$ by $1.5^{u_i}$ for $u_i \in U(-1,1)$. 
The initial conditions are fixed at a single viral template, $x_1(0)=1$.

The low-fidelity model is an adaptation of that given in~\cite{Haseltine2002}. 
For the parameter ranges considered in this example, the propensities of the reactions in \Cref{eq:fast1,eq:fast2} are extremely large relative to those of the other reactions in \Cref{eq:viral-kinetics}. 
Low-fidelity model simulations are therefore generated using a hybrid stochastic/deterministic algorithm~\cite{Haseltine2002} that avoids the computational bottleneck arising from excessive firings of the fast reactions.
We approximate these reactions by considering only their net effect on the mean molecule count, ignoring the fast stochastic fluctuations around the slowly-evolving mean.
In this example, we simulate the high-fidelity model conditional on the simulation of the low-fidelity model using a coupling, $p(\cdot~|~\tilde y,\theta)$, that shares the random noise input between the two simulations.
For more details on the coupling approach, see the supplementary material, \Cref{s:coupling-sims}.

For $M = 10^5$ sample parameters generated from the prior distribution for $(k_1,\dots,k_6)$, we produced ten simulations from the low-fidelity model with ten coupled simulations from the high-fidelity model, corresponding to populations of size ten cells for each parameter vector.
The summary statistics are defined as follows. 
First, a cell is considered infected if it has output a nontrivial number of virus replicates over the 200-day horizon, such that $x_4(200) > 3$. 
Each population thus has a number of infected cells: the three-dimensional summary statistics $y$ and $\tilde y$ are 
(i) the infected percentage of the population,
(ii) $\log_2$ of the average viral output of each infected cell by $t=200$, and
(iii) the average percentage along the time horizon that an infected cell first exceeds the detection threshold of $3$.
If there are zero infected cells, we use the zero vector.
The distances $\tilde d(\cdot,\tilde y_{\mathrm{obs}})$ and $d(\cdot,y_{\mathrm{obs}})$ are both the Euclidean distance between summary statistics, shown in \cref{fig:viral-dist} for $N=10^4$ pairs of simulations.

\begin{figure}[tbhp]
 \centering
 \includegraphics[width=0.7\textwidth]{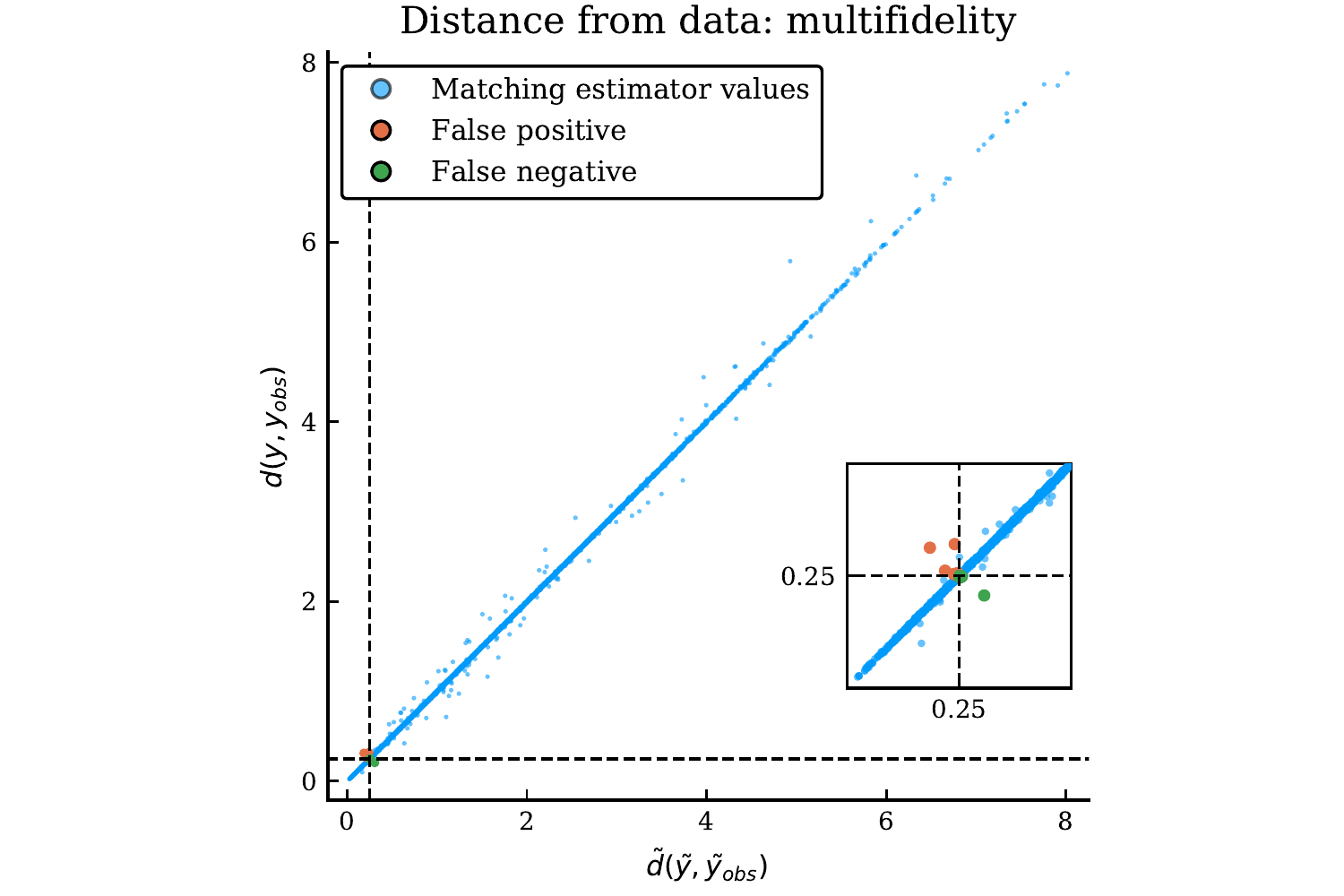}
 \caption{
 Distances between observed data and low-fidelity ($x$-axis) and coupled high-fidelity ($y$-axis) simulations, for $M=10^4$ sample points generated from the six-dimensional log-uniform prior. 
 Quadrants correspond to the four possible values of $(\tilde w, w) \in \{ 0, 1 \}^2$.
 Inset: Region where $\tilde d$ and $d$ close to $\tilde \epsilon = \epsilon = 0.25$.}
 \label{fig:viral-dist}
\end{figure}

Note that, in comparison to the repressilator example (\cref{fig:distances}), the distances in this case are much more closely correlated.
However, the relative speed-up in simulation times is not significantly different: the average cost of a low-fidelity simulation is 17.0\% of an average high-fidelity simulation in the repressilator example, compared to 17.4\% in this example.
The improved accuracy in \cref{fig:viral-dist} suggests that the optimal continuation probabilities $(\hat \eta_1,\hat \eta_2)$ should be smaller, as fewer corrections will be needed.

\subsection{Applying early accept/reject multifidelity ABC}
\label{sss:viral-mlABC} 

We return to measuring a sample's quality by $\mathrm{ESS}/T_{\mathrm{tot}}$.
Taking the full set of $10^5$ pairs of simulations implies optimal continuation probabilities of $(\hat \eta_1,\hat \eta_2) = (0.161, 0.048)$.
We produced 100 independent runs of the adaptive phase of \Cref{a:sl-adapt}, using this full set as the burn-in set each time: thus, the adaptive $(\eta_1,\eta_2)$ values began at $(0.161,0.048)$.
The left-most plot in \Cref{fig:adaptive-efficiency} shows the observed distribution of the efficiencies of the 100 samples built during the adaptive phase.
This is clearly multimodal: some samples are built much less efficiently than others.

This is a consequence of $(\hat \eta_1, \hat \eta_2)$ being the optimal continuation probabilities only in the asymptotic limit.
Due to the accuracy of the low-fidelity model, shown in \Cref{fig:viral-dist}, observed misclassifications $w(\theta_i) \neq \tilde w(\theta_i)$ are relatively rare events within a finite sample.
When these rare events do happen, they lead to a much smaller ESS.
For example, assuming that the continuation probabilities stay approximately equal to $(\hat \eta_1, \hat \eta_2)$, then $w_{\mathrm{mf}} = {-5.22}$ for a false positive and $w_{\mathrm{mf}} = 20.82$ for a false negative.
Each realisation used to construct the left-most plot in \Cref{fig:adaptive-efficiency} thus effectively contains a Poisson number of weights $w_{\mathrm{mf}}(\theta_i) \in \{ -5.22, 20.82 \}$, each of which significantly decreases the ESS, inducing a multimodal distribution for efficiency.



\begin{figure}[tbhp]
 \centering
 \includegraphics[width=0.65\textwidth]{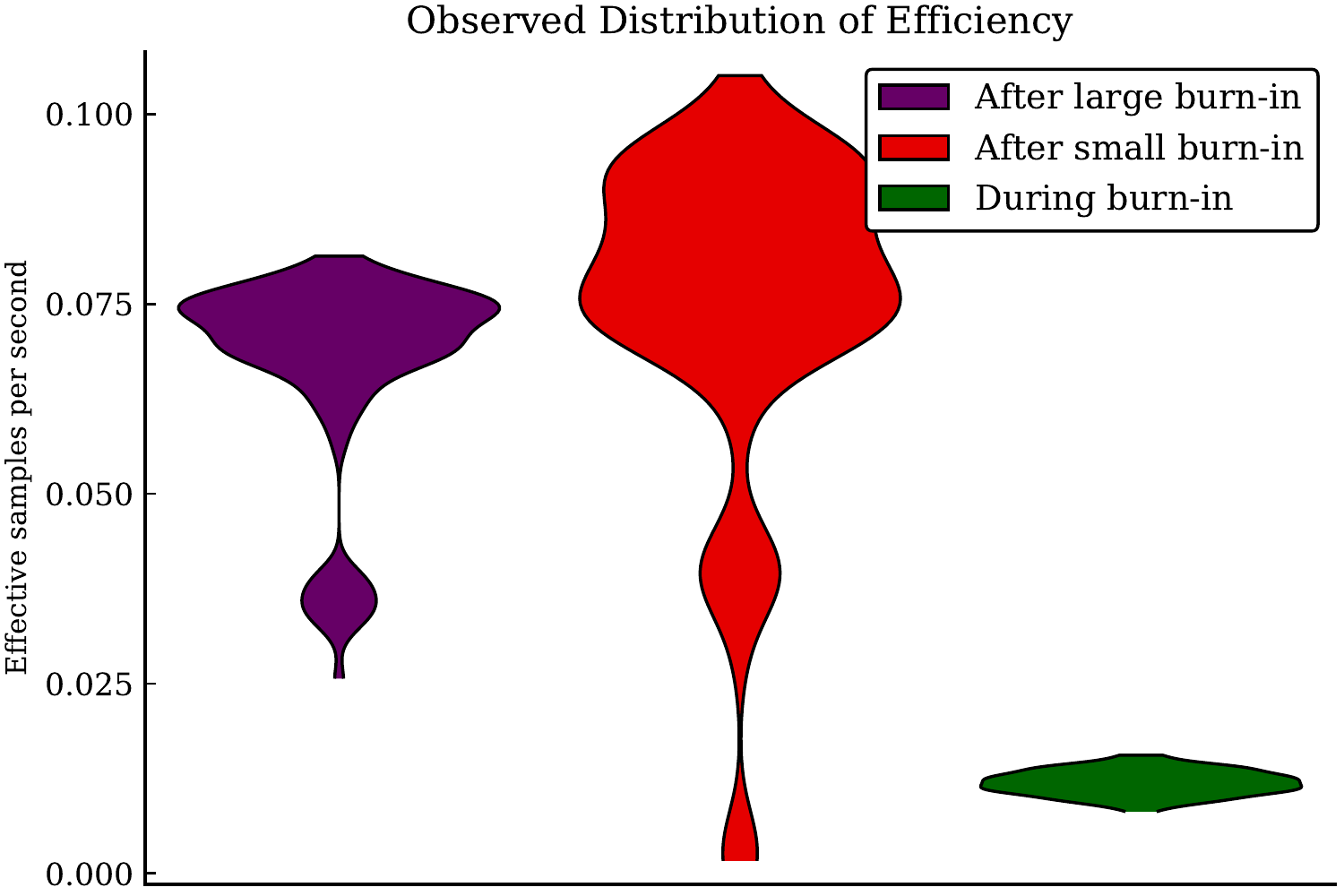}
 \caption{
The efficiency distribution of 100 samples built using \Cref{a:sl-adapt}.
`After large/small burn-in' depicts the efficiencies of the part of the samples built during the adaptive phase, with a starting value of $(\eta_1,\eta_2)$ derived using a common burn-in sample of size $10^5$ (large) or 100 independent burn-in samples of size $10^3$ (small).
`During burn-in' depicts the distribution of the efficiency for each of the 100 small burn-in phases.
}
 \label{fig:adaptive-efficiency}
\end{figure}

Recall that the objective functions $\phi(\eta_1,\eta_2)$ and $\phi(\eta_1,\eta_2;F)$ are the products of the limiting values of the second moment and computation time as the sample size, $N$, approaches infinity. 
This example demonstrates that when the false discovery rate and false omission rate, and hence the continuation probabilities, are particularly small, the effect of finite $N$ becomes more important.
We hypothesise that the sample size $N$ can be considered large enough for an accurate estimate of $p_{fp}$ etc. (and hence of the optimal continuation probabilities) only when the number of weights $ w_{\mathrm{mf}}(\theta_i) \notin \{ 0,1 \} $ is suitably large.
Future work could potentially aim to further optimise the continuation probabilities by taking into account a fixed $N$ or computational budget $\sum_i T_i < B$ more explicitly.

\subsection{Shorter burn-in estimates}
\label{sss:burn-in}

The burn-in set of $10^5$ pairs of simulations took 132 hours of computation time.
We created a further 100 independent samples using the adaptive phase of \Cref{a:sl-adapt}, but this time also partitioned the burn-in set into 100 independent subsamples of size $M=10^3$.
The centre and right plots in \Cref{fig:adaptive-efficiency} show the distributions of efficiency across the 100 samples using this shorter burn-in phase, during the adaptive phase (red) and initial burn-in (green). 
Clearly, the portions of each sample built during the burn-in phase are much built less efficiently, on average, than the portions of the samples built during the adaptive phase.
However, the small burn-in duration leads to an even more pronounced multi-modal efficiency distribution during the adaptive phase, and the effective sample size of some samples has collapsed due to large-magnitude weights.

To observe how far the weights are from the optimum, \Cref{fig:eta-cloud} shows the variability in the continuation probabilities used when applying \cref{a:sl-adapt}.
The values of $(\eta_1,\eta_2)$ used at the beginning of the adaptive phase are shown in blue, and in orange are the resulting values at the end of the adaptive phase.
A point lies at each of the fixed lower boundaries $\eta_1 = 0.01$ or $\eta_2 = 0.01$ if no false positive or no false negative has been observed, respectively.
The continuation probabilities may lie on a lower boundary at the start of the adaptive phase, but during the adaptive phase a false positive or false negative may be observed.
The resulting sample will then include a weight of $100$ or ${-99}$: these are the samples of extremely low efficiency show in the red plot in \Cref{fig:adaptive-efficiency}, as the effective sample size will be significantly decreased.
However, the continuation probabilities that lie on a lower boundary at the end of the adaptive phase (i.e. the orange points) are those where no false positive or false negative has been observed during either the burn-in or the adaptive phase.
These are the samples with extremely high efficiency in the red plot in \Cref{fig:adaptive-efficiency}.
Similarly to the case of a long burn-in phase, future development of the adaptive approach should focus on identifying corrections to $(\eta_1,\eta_2)$ to account for these finite sample size effects.

\begin{figure}[tbhp]
 \centering
 \includegraphics[width=0.65\textwidth]{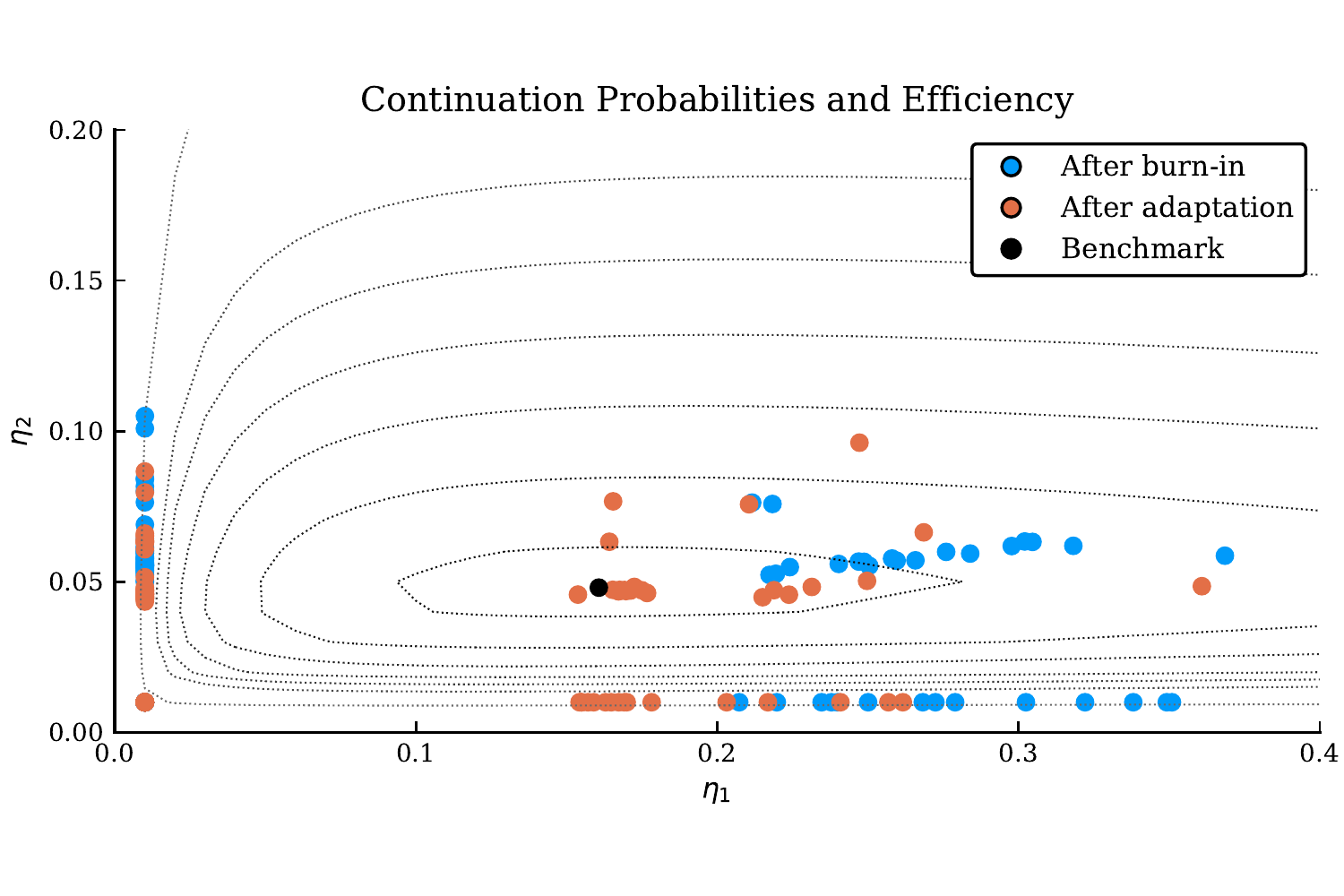}
 \caption{Cloud of near-optimal estimates of $(\hat \eta_1,\hat \eta_2)$ output by \cref{a:sl-adapt}. The black point is the `true' value of $(\hat \eta_1,\hat \eta_2)$ estimated using the entire benchmark dataset. Contours are level sets corresponding to $99\%$, $95\%$, $90\%$, $85\%$, $80\%$, $75\%$ and $60\%$ of the theoretical maximum efficiency achieved at $(\hat \eta_1,\hat \eta_2)$.}
 \label{fig:eta-cloud}
\end{figure}


\section{Discussion and conclusions}
\label{s:end}

In this work, we have considered the use of multifidelity methods to improve the efficiency of constructing ABC estimators by optimally combining high- and low-fidelity models.
We combined the strengths of early rejection and early decision approaches to construct a multifidelity method with both early acceptance and early rejection of parameter samples, which treats the choice whether to simulate the high-fidelity model differently, depending on the output of the low-fidelity simulation.
One consequence of this is that parameter samples for which the high-fidelity model is simulated are not distributed across the parameter space according to the prior, $\pi$.
The early accept/reject method can thus be interpreted as an importance sampling approach, with an importance distribution induced by the low-fidelity model.

The samples built in \cref{a:sl,a:sl-adapt} will contain negative weights whenever $\tilde y \in \tilde \Omega(\tilde \epsilon)$ and $y \notin \Omega(\epsilon)$. 
These negative weights means that the resulting set of weights and parameters $\{ w_i, \theta_i \}$ cannot be treated as a weighted sample from the ABC posterior. 
As a result, \Cref{a:sl} with $\eta_1<1$ cannot currently be adapted to methods reliant on resampling, such as SMC-ABC~\cite{Marjoram2003,Sisson2018,Toni2009}, or to the rejection approach of discarding the proposal used in MCMC-ABC~\cite{Wiqvist2018}.
An SMC approach will allow the acceptance thresholds $\epsilon$ and $\tilde \epsilon$ to be selected adaptively rather than be fixed \emph{a priori}, and for the continuation probabilities to adapt with them.
Therefore, future work should look to apply multilevel approaches to directly building samples from the ABC posterior.

Up to now, we have considered only a single low-fidelity model. 
There are often situations where there are multiple competing low-fidelity models, the accuracy and computational savings of which varies across parameter space.
The low-fidelity models therefore do not necessarily form a hierarchy of progressively increasing accuracy or cost that is valid across all of parameter space, although such hierarchies may exist locally~\cite{Peherstorfer2016}.
For example, both the accuracy and relative speed-up of the approximation in \Cref{ss:viral} will vary with parameters $k_3$ and $k_5$.
One strength of the multifidelity method proposed here is that the requirement for high-fidelity simulations varies across parameter and simulation space, without any analytical input.
Hence, we expect that adapting the approach described in this work to situations with multiple low-fidelity models should focus computational effort towards the models that give the greatest benefits, potentially uncovering local hierarchies in model fidelity in the process.

The continuation probability $\eta(\tilde y)$ was chosen in \Cref{eq:slweight_contprob} to depend on the value of $\mathbb I(\tilde y \in \tilde \Omega(\tilde \epsilon))$, in order to implement early acceptance and early rejection with constant probabilities. 
However, there is no reason to constrain $\eta(\tilde y)$ to this form.
Further work in this area could also explore the potential of generalisations, such as 
$
\eta(\tilde y) = \sum_{r=1}^R \eta_r \mathbb I(\tilde y \in \Omega_r)
$
, for any partition $\Omega_r$ of the output space of the low-fidelity model, $\tilde p(\cdot~|~\theta)$. 
For example, we could consider
$
\Omega_r = \{ \tilde d(\tilde y, \tilde y_{\mathrm{obs}}) \in (\tilde \epsilon_r, \tilde \epsilon_{r-1}) \}
$,
for a decreasing sequence of thresholds $\{ \tilde \epsilon_r \}$. 
Another option would be to also include explicit $\theta$ dependence into the continuation probability $\eta(\tilde y,\theta)$ to reflect, for example, the effect of $\theta$ on the times taken to simulate $\tilde y \sim \tilde p(\cdot~|~\theta)$ and $y \sim p(\cdot~|~\theta)$, or knowledge about $F(\theta)$.

\Cref{ss:etaalgo} discusses one way of dealing with the lack of \emph{a priori} knowledge on the ROC analysis of the cheap rejection sampler $\tilde w = \mathbb I(\tilde y \in \tilde \Omega(\tilde \epsilon))$ as an approximation to the expensive rejection sampler $w = \mathbb I(y \in \Omega(\epsilon))$ and hence of the optimal continuation probabilities.
However, different application areas may provide low-fidelity models with known error bounds relative to the high-fidelity models, such as standard results on balanced truncation~\cite{Gugercin2004} for example. 
It may be possible to use these bounds to reduce uncertainty in the ROC values more efficiently than in \cref{a:sl-adapt}. 
This approach is likely to be much more application-driven, as much error estimation theory for model reduction is based on specific model reductions and specific model outputs and summary statistics~\cite{Benner2017,Prescott2012}.

\section*{Acknowledgments}
Ruth E.\ Baker is a Royal Society Wolfson Research Merit Award holder and a Leverhulme Research Fellow. Thomas P.~Prescott and Prof. Baker also acknowledge the Biotechnology and Biological Sciences Research Council for funding via grant no. BB/R000816/1. 

\bibliographystyle{siamplain}
\bibliography{../../../refs/library}

\newpage
\appendix

\section{Link between ESS and variance}
\label{s:ESS}

The justification for using the effective sample size (ESS)~\cite{Kong1992} is based on estimating the variance of the estimator of $\mathbb E_{\mathrm{ABC}}(F(\theta)~|~y_{\mathrm{obs}})$ given by
\[
 \mu_{\mathrm{ABC}} = \frac{\sum w_i F(\theta_i) / N}{\sum w_i / N}.
\]
Using the delta method~\cite{Seltman2018, Kendall2010}, the variance of this estimator is approximated as
\[
 \mathbb V(\mu_{\mathrm{ABC}}) 
 \approx 
 \frac{1}{N \mu_W^2} \left( \left( \frac{\mu_{WF}}{\mu_W} \right)^2 \mathbb V(W) + \mathbb V( WF ) - 2\left( \frac{\mu_{WF}}{\mu_W} \right) \mathrm{Cov}(W, WF) \right),
\]
where $W = w_i$ and $WF = w_i F(\theta_i)$ are the random sample weights and weighted sample values, respectively, and we denote the expectations as $\mu_{WF} = \mathbb E(\sum w_i F(\theta_i) / N) = \mathbb E(WF)$ and $\mu_W =  \mathbb E(\sum w_i / N) = \mathbb E(W)$, respectively.

Since $w_i = w_{\mathrm{mf}}(\theta_i)$ is unbiased, the expectations in this expression can be written as
\begin{align*}
 \mu_{WF} &= Z \mathbb E_{\mathrm{ABC}}(F(\theta)~|~y_{\mathrm{obs}}), \\
 \mu_{W} &= Z,
\end{align*}
where $Z = \mathbb P(y \in \Omega(\epsilon))$ is the normalisation constant for the ABC posterior in \Cref{eq:abcposterior}. 
Writing $\bar F = \mathbb E_{\mathrm{ABC}} (F(\theta)~|~y_{\mathrm{obs}})$, the approximation to the variance is therefore equal to 
\begin{align}
 \mathbb V(\mu_{\mathrm{ABC}}) &\approx 
 \frac{1}{N \mathbb E(W)^2} \left( \bar F^2 \mathbb V(W ) + \mathbb V(WF) - 2 \bar F \mathrm{Cov} (W, WF) \right) \nonumber \\
 &= \frac{\mathbb V \left( W(F-\bar F) \right)}{N\mathbb E(W)^2} = \frac{\mathbb E \left( W^2 (F-\bar F)^2 \right)}{N\mathbb E(W)^2}, \label{eq:varapprox}
\end{align}
where the final equality follows from $\mathbb E(WF) = \bar F \mathbb E(W)$. This derivation leads to the expression in \Cref{l:varmu}, approximating the variance of the estimator $\mu_{\mathrm{ABC}}$ of $\bar F$.

The ESS derivation in \cite{Kong1992} makes a further approximation to remove dependence on $F$.
Following this, we further approximate the variance as
\[
 \mathbb V(\mu_{\mathrm{ABC}}) \approx \frac{\alpha \mathbb E(W^2)}{N \mathbb E(W)^2},
\]
for a constant $\alpha$.
We repeat the caveat from~\cite{Kong1992} that this approximation can be substantially inaccurate, but also repeat that this approximation usefully means that $\mathbb V(\mu_{\mathrm{ABC}})$ can be written independently of $F$.

This uncovers the link between the variance of $\mu_{\mathrm{ABC}}$ and the ESS. From the definition of the ESS in \Cref{eq:ESS} we find that
\[
 \mathrm{ESS} = \frac{(\sum w_i)^2}{\sum w_i^2} \approx N \frac{\mathbb E(W)^2}{\mathbb E(W^2)} \approx \frac{\alpha}{\mathbb V(\mu_{\mathrm{ABC}})}.
\]
Hence, even in the case where the weights may take negative values, $w_i<0$, the ESS remains a reasonable heuristic for quantifying the quality of the size of the sample: to a rough approximation, the decay of the estimator variance is inversely proportional to the ESS.
Note that, in \Cref{ss:Fdep}, we avoid the second approximation in this derivation and instead use the more accurate approximation of the variance of the estimate (given in \Cref{eq:varapprox}) to quantify the sample quality.

\section{Optimising efficiency}
\label{pf:eta}

The function $\phi(\eta_1,\eta_2)$ in \Cref{eq:ESSphi} is equal to
\[
\phi(\eta_1,\eta_2) 
=
\biggl[ \bigl(p_{tp} - p_{fp} \bigr) + \frac{1}{\eta_1} p_{fp} + \frac{1}{\eta_2} p_{fn}  \biggr] \biggl[ \mathbb E(\tilde c) + \eta_1 c_p + \eta_2 c_n \biggr].
\]
We aim to minimise this function on $\eta_i \in (0,1]^2$.

\begin{lemma}
Let $Y,a,b,c,d>0$ be positive constants in the function $\phi(x_1,x_2) = (X+a/x_1 + b/x_2) (Y+cx_1+dx_2)$, defined on the positive quadrant $x_1,x_2 > 0$. If $X>0$ then this function has a single minimum at $x_1^\star = \sqrt{aY/cX}$ and $x_2^\star = \sqrt{bY/dX}$.
If $X \leq 0$ then there is no minimiser.
\end{lemma}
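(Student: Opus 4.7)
The plan is to expand $\phi$ into nine monomials and apply the AM--GM inequality to three carefully chosen pairs of reciprocal-type terms. First I would write
\begin{equation*}
\phi(x_1,x_2) = XY + ac + bd + \left( Xcx_1 + \frac{aY}{x_1} \right) + \left( Xdx_2 + \frac{bY}{x_2} \right) + \left( \frac{ad\,x_2}{x_1} + \frac{bc\,x_1}{x_2} \right),
\end{equation*}
grouping the six cross-terms into three pairs whose products are constants. When $X>0$ every summand in each bracket is strictly positive on the positive quadrant, so AM--GM can be applied termwise.

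Next I would apply AM--GM to each bracket, obtaining $Xcx_1 + aY/x_1 \geq 2\sqrt{acXY}$, $Xdx_2 + bY/x_2 \geq 2\sqrt{bdXY}$, and $ad\,x_2/x_1 + bc\,x_1/x_2 \geq 2\sqrt{abcd}$, with equality iff $x_1^2 = aY/(cX)$, $x_2^2 = bY/(dX)$, and $(x_1/x_2)^2 = ad/(bc)$ respectively. The crucial observation is that these three equality conditions are mutually compatible: at the proposed minimiser $(x_1^\star, x_2^\star) = \bigl(\sqrt{aY/(cX)},\sqrt{bY/(dX)}\bigr)$ one has $(x_1^\star/x_2^\star)^2 = ad/(bc)$ as an algebraic identity, so the three inequalities become simultaneous equalities. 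Summing them yields a global lower bound which is attained uniquely at $(x_1^\star,x_2^\star)$, since the first two equality conditions already pin down $(x_1,x_2)$ on the positive quadrant.

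For $X \leq 0$ I would rule out a minimiser by direct limit arguments. If $X<0$, evaluating along the ray $x_1 = x_2 = t$ gives $\phi(t,t) = (X + (a+b)/t)(Y + (c+d)t)$, whose first factor tends to $X<0$ while the second tends to $+\infty$, so $\phi \to -\infty$. If $X=0$, the same decomposition becomes $\phi = ac + bd + aY/x_1 + bY/x_2 + ad\,x_2/x_1 + bc\,x_1/x_2$; AM--GM on the last pair yields $\phi \geq ac + bd + 2\sqrt{abcd} = (\sqrt{ac}+\sqrt{bd})^2$, and the bound is approached along $(x_1,x_2) = (\lambda\sqrt{ad},\lambda\sqrt{bc})$ as $\lambda\to\infty$, but the residual terms $aY/x_1 + bY/x_2$ are strictly positive for every finite $(x_1,x_2)$, so the infimum is never attained.

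The argument involves no serious technical obstacle; the only subtle check is the compatibility of the three AM--GM equality conditions, which follows from a one-line algebraic identity. Once that is verified, existence, uniqueness, and the explicit formula for the minimiser in the $X>0$ case all emerge from a single chain of inequalities, and the non-existence in the $X \leq 0$ case follows from the behaviour of $\phi$ along the diagonal ray.
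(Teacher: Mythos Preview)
Your proof is correct and follows essentially the same route as the paper. The paper expands $\phi$ into the identical six-term decomposition, then rescales via $z_i = x_i/x_i^\star$ and rewrites each bracket as $2\sqrt{\,\cdot\,}\cosh(\ln z_i)$ (or $\cosh(\ln z_1 - \ln z_2)$); since $\cosh(\ln z) = \tfrac{1}{2}(z+1/z)$, this is exactly your AM--GM inequality in disguise, and the rescaling is just a device that makes your compatibility check automatic. The $X\le 0$ cases are handled by the same diagonal-ray limiting argument in both.
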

\begin{proof}
First, if $X<0$ then $\phi \rightarrow -\infty$ is unbounded as both $x_1,x_2 \rightarrow \infty$. Expanding $\phi$ gives
\[
\phi(x_1,x_2) = XY + ac + bd + \left( \frac{aY}{x_1} + cXx_1 \right) + \left( \frac{bY}{x_2} + dXx_2 \right) + \left( bc \frac{x_1}{x_2} + ad \frac{x_2}{x_1} \right)
\]
When $X=0$ then, for any ratio $x_1/x_2 = \lambda>0$, the value of $\phi$ will decrease as $x_1 = \lambda x_2 \rightarrow \infty$. Thus there can be no minimiser in the positive quadrant for $X \leq 0$.

This leaves the case where $X>0$. Writing $z_1 = \sqrt{cX/aY} x_1$ and $z_2 = \sqrt{dX/bY} x_2$ gives
\begin{align*}
\phi(z_1,z_2) &= XY + ac + bd \\ 
&\quad + \sqrt{acXY} \left( z_1 + \frac{1}{z_1} \right)
+ \sqrt{bdXY} \left( z_2 + \frac{1}{z_2} \right)
+ \sqrt{acbd} \left( \frac{z_1}{z_2} + \frac{z_2}{z_1} \right) \\
&= XY + ac + bd \\
&\quad + 2\left[ \sqrt{acXY} \cosh(\ln(z_1)) + \sqrt{bdXY} \cosh(\ln(z_2)) + \sqrt{acbd} \cosh(\ln(z_1)- \ln(z_2)) \right].
\end{align*}
This function clearly has a unique minimum when $z_1 = z_2 = 1$; that is, for $x_i = x_i^\star$.
\end{proof}

\begin{corollary}
If $aY/cX \leq 1$ and $bY/dX \leq 1$ then $\phi$ is minimised on $(0,1]$ at $x^\star$. Else $\phi$ is minimised on the set $\{ x ~|~ \max(x_1,x_2) = 1\}$ forming the boundary of $(0,1]$.
\end{corollary}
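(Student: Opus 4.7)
The plan is to reduce the constrained optimisation over the half-open square $(0,1]^2$ to a combination of the interior critical point analysis already performed in the preceding lemma and a standard boundary argument. The role of the two conditions $aY/cX\leq 1$ and $bY/dX\leq 1$ is exactly to determine whether the unique unconstrained critical point $x^\star = (\sqrt{aY/cX}, \sqrt{bY/dX})$ lies inside the feasible region.

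First I would record two qualitative facts about $\phi$ on the open quadrant $(0,\infty)^2$: (i) by expansion of the product, $\phi(x_1,x_2) \to +\infty$ whenever either coordinate tends to $0^+$ (the $aY/x_1$ and $bY/x_2$ terms dominate), and (ii) by the preceding lemma, $\phi$ is smooth on the open quadrant with a unique critical point at $x^\star$. Because of (i), $\phi$ restricted to $(0,1]^2$ attains its infimum: the sublevel set $\{\phi \leq \phi(1,1)\}$ is contained in some compact rectangle $[\delta,1]^2$ with $\delta>0$, on which a continuous function attains its minimum. Call any such constrained minimiser $x^\#$.

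For the first clause of the corollary, assume $aY/cX\leq 1$ and $bY/dX\leq 1$, so $x^\star \in (0,1]^2$. Since $x^\star$ is the unique global minimiser on the superset $(0,\infty)^2$ and it belongs to the feasible set, it must equal $x^\#$. For the second clause, assume at least one of the conditions fails, so $x^\star \notin (0,1]^2$. If the minimiser $x^\#$ lay in the open interior $(0,1)^2$, then it would be an interior critical point of $\phi$, forcing $x^\# = x^\star$ by uniqueness --- a contradiction. Combined with (i), which rules out the open edges $x_1=0$ and $x_2=0$ (which are not in the feasible set anyway), the minimiser must therefore lie on the closed edges $\{x_1=1\} \cup \{x_2=1\} \cap (0,1]^2$, i.e.\ on the set $\{x \in (0,1]^2 ~|~ \max(x_1,x_2)=1\}$.

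I expect no serious technical obstacle here; the step that most deserves care is the coercivity/compactness argument that justifies the existence of $x^\#$ (ensuring the infimum is attained on the non-closed set $(0,1]^2$) together with the clean invocation of uniqueness from the previous lemma to exclude interior minimisers in Case 2. The one-dimensional restrictions $\phi(\cdot,1)$ and $\phi(1,\cdot)$ appearing in \Cref{l:boundary} can then be used to locate $x^\#$ on the boundary explicitly, but for the statement of the corollary as given, only the localisation onto $\{\max(x_1,x_2)=1\}$ is required.
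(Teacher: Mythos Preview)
Your argument is correct and is essentially the natural way to fill in this corollary, which the paper states without proof immediately after the lemma; the intended reasoning is exactly the interior/boundary dichotomy you give, with existence of the constrained minimiser coming from coercivity as $x_i\to 0^+$.

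One small point worth tightening: in Case~2 you invoke ``uniqueness of the critical point'' from the lemma, but the lemma as stated only asserts a \emph{single minimum}. The stronger fact you need---that $x^\star$ is the unique stationary point on $(0,\infty)^2$---follows not from the lemma's statement but from its proof: the $\cosh$ representation exhibits $\phi$ as strictly convex in the log-coordinates $(\ln z_1,\ln z_2)$, so any stationary point is the global minimum. You could either cite that convexity explicitly, or (more in the spirit of the paper's derivation) argue directly that the feasible box $(0,1]^2$ maps to the convex half-space product $(-\infty,\ln\sqrt{cX/aY}]\times(-\infty,\ln\sqrt{dX/bY}]$ in log-coordinates, on which a strictly convex function is minimised at its unconstrained minimiser if that lies in the set, and otherwise on the upper boundary $\{\max(x_1,x_2)=1\}$.
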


\begin{lemma}
On $x_1 = 1$ and $x_2 \leq 1$, the function $\phi$ is minimised at 
\[
x_2 = \min \left( 1, \sqrt{\frac{b(Y+c)}{d(X+a)}} \right).
\]
On $x_2 = 1$ and $x_1 \leq 1$, the function $\phi$ is minimised at
\[
x_1 = \min \left( 1, \sqrt{\frac{a(Y+d)}{c(X+b)}} \right).
\]
\end{lemma}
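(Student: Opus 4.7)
The plan is to reduce each case to a one-dimensional optimisation by direct substitution, and then apply the same calculus argument that underlies the preceding global lemma.

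For the first statement, I fix $x_1 = 1$ in the definition of $\phi$. Setting $X' = X + a > 0$ and $Y' = Y + c > 0$, the univariate function to minimise on $(0, 1]$ becomes
\[
\phi(1, x_2) = \left( X' + \frac{b}{x_2} \right) \left( Y' + d x_2 \right) = X'Y' + bd + X' d x_2 + \frac{b Y'}{x_2}.
\]
Differentiating with respect to $x_2$ gives
\[
\frac{d}{dx_2} \phi(1, x_2) = d X' - \frac{b Y'}{x_2^2},
\]
whose unique positive root is $x_2^\star = \sqrt{b Y' / (d X')} = \sqrt{b(Y+c) / (d(X+a))}$. The second derivative $2 b Y' / x_2^3$ is strictly positive on $(0, \infty)$, so $\phi(1, \cdot)$ is strictly convex on the positive reals with global minimum at $x_2^\star$, and satisfies $\phi(1, x_2) \to \infty$ as $x_2 \to 0^+$.

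It follows that on the restricted domain $(0, 1]$ the minimiser is $x_2^\star$ itself when $x_2^\star \leq 1$, and the boundary point $x_2 = 1$ otherwise (since in the latter case $\phi(1, \cdot)$ is strictly decreasing throughout $(0, 1]$). Either way, the minimiser equals $\min\{1, \sqrt{b(Y+c)/(d(X+a))}\}$, as claimed. The second statement follows by the symmetric substitution $x_2 = 1$ with $X'' = X + b$ and $Y'' = Y + d$; the identical calculation yields the minimiser $\min\{1, \sqrt{a(Y+d)/(c(X+b))}\}$.

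There is no real obstacle here: the only thing to watch is that every factor in the square roots is strictly positive, which is immediate from the standing assumption $X, Y, a, b, c, d > 0$, and that the restriction to $(0,1]$ (rather than $(0,\infty)$) only replaces the interior critical point by the boundary when the critical point lies above $1$, which the $\min$ formula captures.
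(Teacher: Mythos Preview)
Your proof is correct and follows essentially the same route as the paper: substitute $x_1 = 1$, expand the product to isolate the $x_2$-dependent terms, and minimise the resulting one-variable function of the form $\alpha x_2 + \beta/x_2$. The paper does the last step via the $\cosh$ rewriting used in the preceding lemma rather than by differentiation, but that is a cosmetic difference; one small caveat is that the appendix only assumes $Y,a,b,c,d>0$ (not $X>0$), so the positivity you need is really that of $X+a$ and $X+b$, which the paper also uses implicitly.
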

\begin{proof}
This follows from writing
\begin{align*}
\phi(1, x_2) &= (X+a)(Y+c) + bd + \left( \frac{b(Y+c)}{x_2} + d(X+a) x_2 \right) \\
&= (X+a)(Y+c) + bd + 2\sqrt{bd(X+a)(Y+c)} \cosh \left(\ln\left(x_2  \sqrt{ \frac{d(X+a)}{b(Y+c)}}  \right)\right),
\end{align*}
which has a unique minimum at $x_2 = \sqrt{b(Y+C)/d(X+a)}$.
The proof for $\phi(x_1, 1)$ is similar.
\end{proof}

If we now replace $X = p_{tp} - p_{fp}$, $Y = \mathbb E(\tilde c)$, and $a = p_{fp}$, $b=p_{fn}$, $c = c_p$, $d = c_n$, and $x_i = \eta_i$ into $\phi$, then \cref{c:optimal-eta} holds.

\section{Coupling tau-leap and exact simulations}
\label{s:coupling-sims}

\Cref{a:tauleap,a:complete,a:map} demonstrate how to:
\begin{enumerate}
\item create a tau-leap low-fidelity simulation of a biochemical reaction network;
\item map a coarse-grained description of a unit-rate Poisson process into a fully described Poisson process;
\item map $M$ unit rate Poisson processes to an exact SSA trajectory involving $M$ reactions.
\end{enumerate}
These algorithms are used to simulate the low-fidelity models in \Cref{s:repressilator,ss:viral}, and to produce simulations from the high-fidelity model, conditional on the low-fidelity simulation, as described in \Cref{sss:coupling}.

\begin{algorithm}
\caption{Tau-leap}
\label{a:tauleap}
\begin{algorithmic}
\STATE{Input: interval leap $\tau$; propensity function $v(x) \in \mathbb R^M$; stochastic matrix $\nu \in \mathbb Z^{N \times M}$ with columns $\nu_j$; initial condition $x_0 \in \mathbb Z^N$; stopping time $T$.}
\STATE{Set $i=0$, $t_i=0$, and $x_i = x_0$.}
\WHILE{$t_i<T$}
	\STATE{Set $i = i+1$.}
	\FOR{$j = 1,\dots,M$}
		\STATE{Record $D_{ij} = \tau v_j(x_{i-1})$.}
		\STATE{Generate $P_{ij} \sim \mathrm{Poi}(D_{ij})$.}
	\ENDFOR
	\STATE{Set $t_i = t_{i-1}+\tau$.}
	\STATE{Set $x_i = x_{i-1} + \sum_{j=1}^{M} \nu_j P_{ij}$.}
\ENDWHILE
\RETURN Trajectory $(t_i,x_i)$ and $M$ partial Poisson processes described by $(D_{ij},P_{ij})$.
\end{algorithmic}
\end{algorithm}

\begin{algorithm}
\caption{Completing Poisson processes from tau-leap simulation.}
\label{a:complete}
\begin{algorithmic}
\STATE{Input: interval lengths $D_{ij}$; number of events $P_{ij}$ for $i=1,\dots,K$, and $j=1,\dots,M$.}
\FOR{$j=1,\dots,M$}
 \STATE{Set $P = 0$ and $D=0$}.
 \FOR{$i = 1,\dots,K$}
  \STATE{Generate $P_{ij}$ points, $d_{kj}$, (where $k=P+1,\dots,P+P_{ij}$) uniformly at random on the interval $(D, D+D_{ij}]$.}
  \STATE{Update $P = P + P_{ij}$.}
  \STATE{Update $D = D + D_{ij}$.}
 \ENDFOR
\ENDFOR
\RETURN $M$ independent unit-rate Poisson processes $(d_{kj})$ for $k=1,\dots,P_j$, where $P_j = \sum_i P_{ij}$.
\end{algorithmic}
\end{algorithm}

\begin{algorithm}
\caption{Map reaction Poisson processes to exact SSA trajectory.}
\label{a:map}
\begin{algorithmic}
\STATE{Input: Poisson processes $(d_{kj})$ for $j=1,\dots,M$ and $k=1,\dots,P_j$; propensity function $v(x) \in \mathbb R^M$; stochastic matrix $\nu \in \mathbb Z^{N \times M}$ with columns $\nu_j$; initial condition $x_0 \in \mathbb Z^N$; stopping time $T$.}
\STATE{Set $i=0$ as iteration, and $t_0 = 0$ as real-time.}
\STATE{Set $E_j = 1$ for $j=1,\dots,M$ as index of next event for each reaction channel.}
\STATE{Set $\sigma_j = 0$ for $j=1,\dots,M$ as reaction-time for each reaction.}
\WHILE{$t_i < T$}
	\STATE{Evaluate $v = v(x_i)$.}
	\WHILE{$E_j > P_j$ for any $j=1,\dots,M$}
		\STATE{Update $P_j = P_j + 1$.}
		\STATE{Generate $E \sim \mathrm{Exp}(1)$.}
		\STATE{Set $d_{P_j,j} = d_{P_j-1,j} + E$ to extend the unit level Poisson process for reaction $j$, if needed.}
	\ENDWHILE
	\STATE{Evaluate $\tau_j = (d_{E_j,j} - \sigma_j) /v_j$ for $j=1,\dots,M$ as real-time wait time until next event in each reaction channel.}
	\STATE{Evaluate $J = \argmin_j \tau_j$ as the next reaction firing to occur.}
	\STATE{Update $i = i+1$ as next iteration.}
	\STATE{Set $t_i = t_{i-1} + \tau_J$ to advance real-time.}
	\STATE{Set $x_i = x_{i-1} + \nu_J$ to update the state.}
	\STATE{Set $\sigma_j = \sigma_j + \tau_J v_j$ for $j=1,\dots,M$ to advance reaction-time for each reaction.}
	\STATE{Set $E_J = E_J + 1$ to specify the index of the next event.}
\ENDWHILE
\RETURN Exact SSA trajectory sequence $(t_i,x_i)$.
\end{algorithmic}
\end{algorithm}

\subsection{Repressilator}

In the example used in \Cref{s:repressilator}, the simulation $\tilde y \sim \tilde p(\cdot~|~\theta)$ of the low-fidelity model in \Cref{eq:repressilator} was generated following the algorithm described in~\cite{Cao2005}, developing the approach in~\cite{Gillespie2003}. 
Briefly, this algorithm uses tau-leaping~\cite{Gillespie2001} to discretise the time dimension, with adaptations to ensure that the molecule counts stay positive and to limit the rate of change of the propensities within a leap of length $\tau$.
The basic tau-leaping algorithm is given in \Cref{a:tauleap}: the key is that both a trajectory and a set of partially-described Poisson processes are output.
The algorithm we have used also adapts $\tau$: we have shared code at \url{https://github.com/tpprescott/mf-abc}.
After simulating $\tilde y$ from the low-fidelity model, if $U<\eta$ then \cref{a:sl} requires a simulation $y \sim p(\cdot~|~\theta)$ from the high-fidelity model.
Rather than simulating $y$ independently of $\tilde y$, we can instead define a coupling $p(\cdot~|~\tilde y,\theta)$, which will generate a coupled simulation $y \sim p(\cdot~|~\tilde y,\theta)$ from the high-fidelity model, conditional on $\tilde y$.

The coupling, $p(\cdot ~|~\tilde y,\theta)$, is defined by sharing the underlying unit-rate Poisson processes of each of the twelve reaction channels between the tau-leap and exact SSAs~\cite{Anderson2012,Lester2018,Lester2015}.
For each reaction channel, $j$, \Cref{a:tauleap} produces a coarse-grained description of the underlying random noise process.
This is a sequence of interval widths $D_{ij}$ and Poisson random numbers $P_{ij}$, for $i=1,\dots,N_j$ corresponding to the number of events in a unit rate Poisson process during that interval.
This coarse-grained description for each Poisson process was completed into a fully-described realisation using \Cref{a:complete} on each reaction's unit rate Poisson process.
For each $j=1,\dots,12$ and $i=1,\dots,N_j$, the $P_{ij}$ event times are placed uniformly on the interval of length $D_{ij}$, which are placed one after the other.
This completely describes a set of unit-rate Poisson processes (one for each reaction) that can then be mapped to an exact trajectory using \Cref{a:map}. 
An example of the output of this coupling is given in \cref{fig:couplingeg}.

\begin{figure}[tbhp]
\centering
\label{fig:couplingegfig}\includegraphics[width=\textwidth]{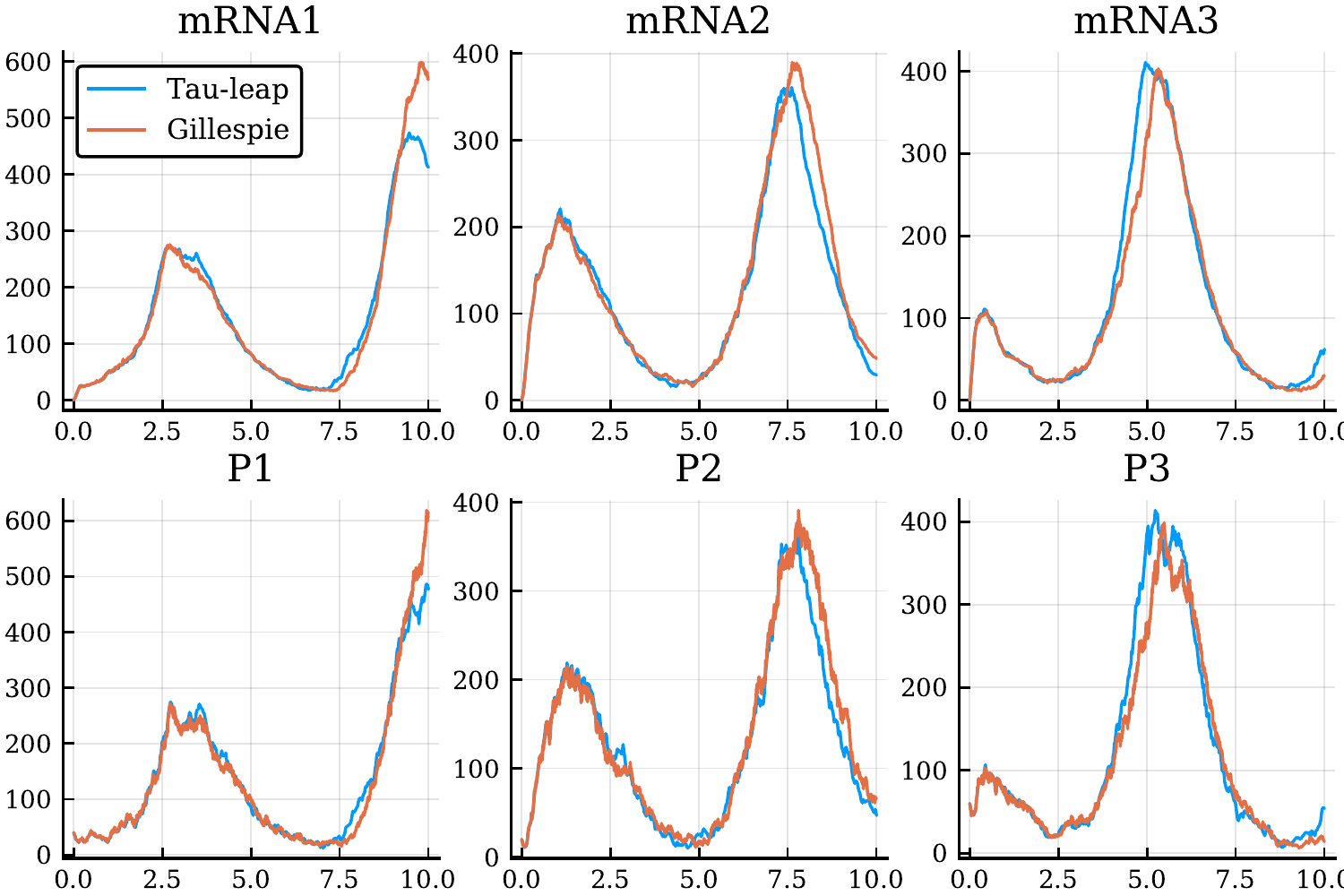}
\caption{Example of a tau-leap simulation $\tilde y$ (blue line), and a Gillespie simulation $y$ conditional on $\tilde y$ (orange line). The simulated molecule numbers of each mRNA (top row) and protein (bottom row) are shown. The parameter values used in these simulations are the nominal values $\alpha_0 =1$, $\beta = 5$, $\alpha = 1000$, $n = 2$, and $K_h = 20$, with the initial conditions $(m_1,m_2,m_3) = (0,0,0)$ and $(p_1,p_2,p_3) = (40,20,60)$. }
\label{fig:couplingeg}
\end{figure}

\subsection{Viral Kinetics}

The example in \Cref{ss:viral} adapts a low-fidelity model originally introduced in \cite{Haseltine2002}.
The reactions in \Cref{eq:viral-kinetics} comprise two fast reactions, \Cref{eq:fast1,eq:fast2}, which we approximate, and four slow reactions.
At the initial time $t=0$, we generate four random wait times corresponding to the slow reactions, as would happen in the Gillespie algorithm. 
We also generate one additional, deterministic, wait time
\[
\Delta t = \frac{-1}{k_5} \log( 1 - |1/\delta| ),
\]
where $\delta = (k_3 x_1(t_0)/k_5) - x_3(t_0)$ is the distance of $x_3$ from the value of its steady state mean, considering only the fast reactions in \Cref{eq:fast1,eq:fast2}. 
We set $\tau$ to be the minimum of these five wait times, and advance to $t_1 = \tau$.
If one of the random wait times is the minimum, then the corresponding reaction fires, as in the SSA. 
Alternatively, if the deterministic wait time is shortest, then the value of $x_3(t_1) = x_3(t_0) \pm 1$, depending on whether $\delta>0$ or $\delta<0$, respectively. 
In either case, all five wait times are then reduced by $\tau$, and a new wait time generated for the reaction whose wait time is reduced to zero.
The simulation of the low-fidelity model then continues iterating from $t_1$.
 
To simulate $y \sim p(\cdot~|~\theta)$ from the high-fidelity model, we can define a coupling $p(\cdot~|~\tilde y,\theta)$ that couples the simulation of the high-fidelity model to the simulation of the low-fidelity model.
In this example, four exact Poisson processes have already been produced corresponding to the slow reactions in \Cref{eq:viral-kinetics}. 
To simulate the high-fidelity model conditional on the simulation of the low-fidelity model, a further two independent unit-rate Poisson processes are produced, corresponding to the reactions in \Cref{eq:fast1,eq:fast2}, and mapped to an exact trajectory, as described in \Cref{a:map}.
The resulting coupled simulations are tightly correlated, as can be seen in the example in \cref{fig:viral-completion}:
the trajectories are essentially equal, with only the very fast stochastic fluctuations in \emph{struct} missing from the simulation of the low-fidelity model.
 
\begin{figure}[tbhp]
  \centering
  \includegraphics[width=\textwidth]{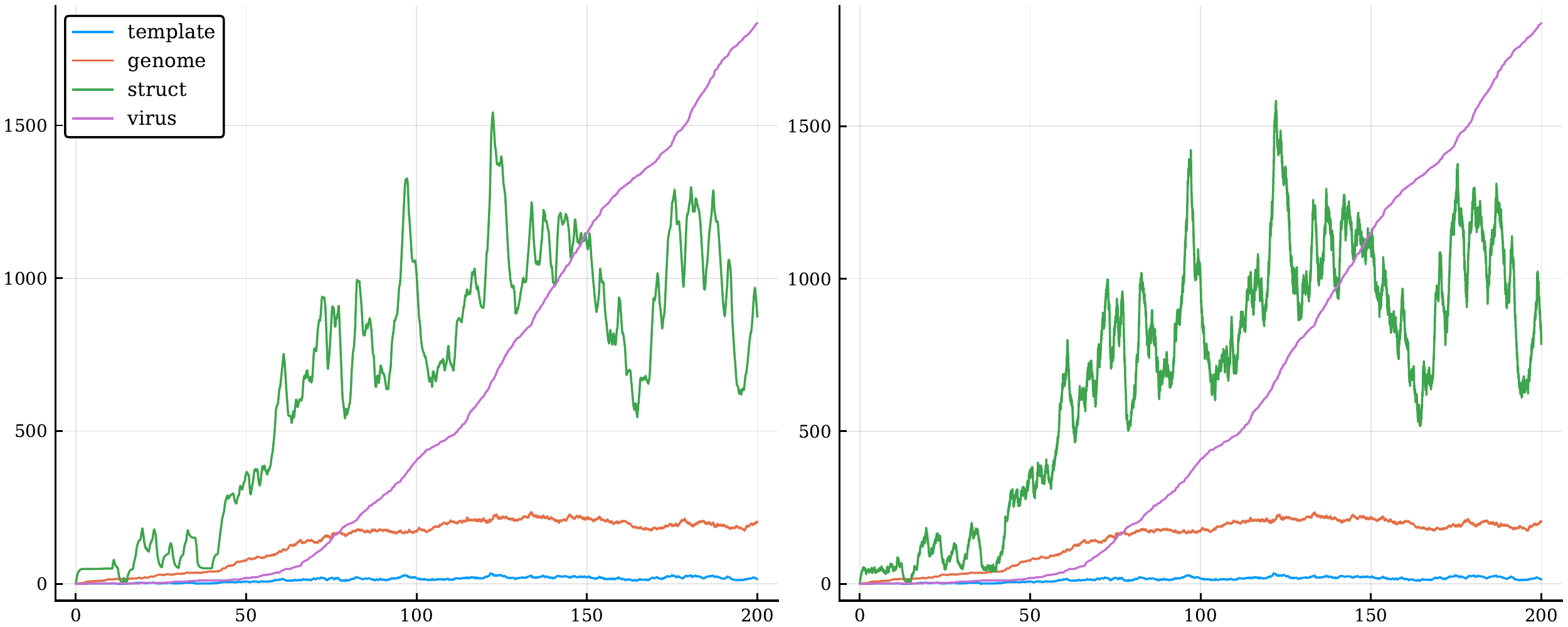}
  \caption{Example coupled simulations of (left) low-fidelity hybrid deterministic/stochastic model, and (right) high-fidelity SSA model, conditional on the simulation of the low-fidelity model. 
  Reaction rate parameters are the nominal values, $(1, 0.025, 100, 0.25, 1.9985, 7.5 \times 10^{-5})$, with initial conditions $(1,0,0,0)$.}
  \label{fig:viral-completion}
\end{figure}

\end{document}